\pgfplotsset{compat=1.10}
\let\OLDthebibliography\thebibliography
\renewcommand\thebibliography[1]{
  \OLDthebibliography{#1}
  \setlength{\parskip}{0pt}
  \setlength{\itemsep}{0pt plus 0.3ex}
}
\newcommand{\mypm}{\mathbin{\mathpalette\@mypm\relax}}
\newcommand{\@mypm}[2]{\ooalign{%
  \raisebox{.1\height}{$#1+$}\cr
  \smash{\raisebox{-.6\height}{$#1-$}}\cr}}
\newcommand{\cube}{\{\pm 1\}^n}
\newcommand{\ignore}[1]{}
\newcommand{\N}{\mathbb{N}}
\newcommand{\R}{\mathbb{R}}
\newcommand{\C}{\mathbb{C}}
\DeclareMathOperator{\Tr}{Tr}
\renewcommand{\sup}{\mathrm{sup}}
\renewcommand{\min}{\mathrm{min}}
\renewcommand{\max}{\mathrm{max}}
\renewcommand{\epsilon}{\varepsilon}
\newcommand{\Diag}{\text{\rm Diag}}
\def\01{\{0,1\}}
\newcommand{\ind}[2]{\underline{#1}}
\newcommand{\cb}{_\mathrm{cb}}
\newtheorem{defin}{Definition}[section]
\newtheorem{theorem}[defin]{Theorem}
\newtheorem{lemma}[defin]{Lemma}
\newtheorem*{claim*}{Claim}
\newtheorem*{conjecture*}{Conjecture}
\theoremstyle{definition}
\DeclareMathOperator{\diag}{Diag}
\DeclareMathOperator{\cbdeg}{cb-deg_\epsilon}
\DeclareMathOperator{\bmdeg}{bm-deg_\epsilon}
\newcommand{\Hilbert}{\mathcal H}
\date{}
\begin{document}

\title{Semidefinite programming formulations for the completely bounded norm of a tensor} 
\author{Sander Gribling\thanks{Centrum Wiskunde \& Informatica (CWI) and QuSoft, Amsterdam, The Netherlands. The work was supported by the Netherlands Organization for Scientific Research, grant number 617.001.351.} 
\and Monique Laurent\thanks{Centrum Wiskunde \& Informatica (CWI) and QuSoft, Amsterdam and Tilburg University, The Netherlands.}}
\maketitle

\begin{abstract}
We show that a certain tensor norm, the \emph{completely bounded norm}, can be expressed by a semidefinite program. This tensor norm recently attracted attention in the field of quantum computing, where it was used by Arunachalam, Bri\"{e}t and Palazuelos for characterizing the quantum query complexity of Boolean functions. Combined with their results, we obtain a new characterization of the quantum query complexity through semidefinite programming. Using the duality theory of semidefinite programming we obtain a new type of certificates for large query complexity. We show that our class of certificates encompasses the linear programming certificates corresponding to the approximate degree of a Boolean function.
\end{abstract}

\section{Introduction}

Throughout, we let $T = (T_{i_1,\ldots, i_t}) \in \R^{n \times \cdots \times n}$ be a $t$-tensor acting on $\R^n$. The \emph{completely bounded norm} of $T$ is defined as 
\begin{equation} \label{eqdef}
||T||_\mathrm{cb} := \sup\Big\{\Big|\Big|\sum_{i_1,\ldots,i_{t} = 1}^n T_{i_1,\ldots,i_{t}} U_1(i_1)\cdots U_{t}(i_{t})\Big|\Big|\ :\  d \in \N, \ U_j(i) \in O(d) \text{ for } i \in [n], j \in [t]\Big\}.
\end{equation}
Here $||\cdot ||$ is the operator norm and $O(d) \subseteq \R^{d \times d}$ is the group of $d \times d$ orthogonal matrices. Note that in~\eqref{eqdef} one could equivalently optimize over \emph{complex} unitary matrices $U_j(i)$. 

We first show that $||T||_\mathrm{cb}$ can be expressed as the optimal value of a semidefinite program (SDP). 
This SDP involves matrices of size $O(n^{\lceil t/2\rceil})$ and $O(n^{2\lceil t/2\rceil})$ linear constraints, so that an additive $\epsilon$-approximation of its optimal value can be obtained in time~$\mathrm{poly}(n^t, \log(1/\epsilon))$ (see Theorem \ref{thm2} in Section \ref{secSDP}).

To put this result in perspective, if we replace the product $U_1(i_1)\cdots U_{t}(i_{t})$ by the Kronecker (or tensor) product $U_1(i_1) \otimes \cdots \otimes U_{t}(i_{t})$ then we obtain the \emph{jointly completely bounded norm} of $T$. It is known that there is a one-to-one correspondence between the jointly completely bounded norm of a $t$-tensor and the entangled bias of an associated $t$-partite XOR game. The latter can be computed in polynomial time when $t=2$~\cite{Tsir87}, but it is an NP-hard problem to give any constant factor multiplicative approximation of the entangled bias of a $3$-partite XOR game~\cite{Vid16}. 
Hence the jointly completely bounded norm of a $3$-tensor is hard to approximate up to any constant factor.

A main motivation for  our study of the completely bounded norm of a tensor comes from  a connection to the quantum query complexity of  Boolean functions that was recently shown in~\cite{ABP17}, see Section~\ref{sec:qquerycompl}. As an application of our result, the quantum query complexity of a Boolean function $f$ can be obtained by checking feasibility of some SDPs. Using the duality theory of semidefinite programming we obtain a new type of certificates for large query complexity. We show that our class of certificates encompasses the linear programming certificates corresponding to the approximate degree of $f$ and we propose an intermediate class of certificates based on second-order cone programming. Previously, two other semidefinite programming characterizations of quantum query complexity were given in~\cite{BSS03,HLS07} using a different approach. For total functions on $n$ bits, these two SDPs have matrix variables of size $2^n$ while our SDP has a matrix variable of size $\Theta(n^t)$ where $t$ is the number of queries. Thus, for small query complexity (constant) the matrix variable in our SDP is much smaller. In Section~\ref{sec:comparison} we compare the three SDPs in more detail. 

We point out that the notion of completely bounded norm of a tensor considered in the present paper differs from the notion considered in the work of Watrous~\cite{Wat09}.

\section{Preliminaries}

We first recall some basic properties of semidefinite programs. Let $\mathcal S^{N}$ denote the vector space of real symmetric $N \times N$ matrices equipped with the trace inner product $\langle A,B \rangle = \mathrm{Tr}(AB)$, and let $\mathcal S_+^{N} \subset \mathcal S^N$ be the cone of positive semidefinite $N \times N$ matrices. A set of matrices $C,A_1,\ldots, A_m \in \mathcal  S^N$ and a vector $b \in \R^m$ define a pair of  semidefinite programs, a \emph{primal} $(P)$ and a \emph{dual} $(D)$:
\begin{align*} 
(P)  \qquad \max \quad & \langle C, X \rangle &(D) \qquad  \min \quad & \langle b, y\rangle  \\ 
\text{s.t.} \quad& X \in \mathcal S^N_+ &\text{s.t.} \quad&y \in \R^m \\ 
&\mathcal A(X) = b \qquad &&\mathcal A^*(y) - C \in \mathcal S^N_+
\end{align*}
Here,  $\mathcal A:\mathcal S^{N} \to \R^m$ is the linear operator defined by $\mathcal A(X) = (\mathrm{Tr}(A_1 X), \ldots, \mathrm{Tr}(A_m X))$, whose adjoint $\mathcal A^*$ acts on $\R^m$ as $\mathcal A^* (y) = \sum_{i=1}^m y_i A_i$, so that $\langle \mathcal A(X),y\rangle=\langle X, \mathcal A^*(y)\rangle$.

We always have \emph{weak duality}: the optimal value of the primal problem is a lower bound on that of the dual. Moreover, if $(P)$ is strictly feasible (i.e., 
there exists a feasible $X$ with $X-r I \succeq 0$ for some scalar $r>0$) 
and the feasible region of $(P)$ is bounded (i.e., there exists a scalar $R>0$ such that   $\langle X,X \rangle \leq R$ for all feasible $X$), then we have \emph{strong duality}: the optimal values of $(P)$ and $(D)$ are equal and attained~\cite[Exercise~2.12]{BTN01}. If moreover $C,A_1,\ldots, A_m,b$ are rational and their binary encoding length  is at most $L$, then one can approximate either optimal value up to an additive error $\epsilon$ in time $\mathrm{poly}(N,m, \log(R/(r\epsilon)),L)$~\cite{GLS81,dKF16}.

The \emph{Gram matrix} of a set of vectors $x_1,\ldots, x_N\in \R^d$ is the $N \times N$ matrix whose entries are  
\[
\mathrm{Gram}(x_1,\ldots, x_N)_{i,j} := \langle x_i, x_j\rangle.
\]
 Recall that a matrix $X \in \mathcal S^N$ is positive semidefinite if and only if $X = \mathrm{Gram}(x_1, \ldots, x_N)$ for some vectors $x_1, \ldots, x_N\in \R^d$ (for some $d\in\N$). For two sets of vectors $\{x_1,\ldots, x_k\}$ and $\{y_1,\ldots, y_\ell\}$ we use the shorthand notation $\mathrm{Gram}(\{x_i\}, \{y_j\})$ for the Gram matrix of the $k+\ell$ vectors $x_1,\ldots, x_k, y_1,\ldots, y_\ell$, which has the  block structure: 
\[
\mathrm{Gram}(\{x_i\}, \{y_j\}) = \begin{pmatrix} \big(\langle x_i, x_j \rangle\big) \vspace{0.2cm}& \big(\langle x_i, y_{j} \rangle\big) \\ \big(\langle y_i, x_j\rangle\big) & \big(\langle y_i, y_j \rangle \big) \end{pmatrix}.
\]

We will use the following lemma, which follows from a well-known isometry property of the Euclidean space; we give a short proof for completeness.
\begin{lemma} \label{lem:elementaryfact}
Let $x_1,\ldots, x_k, y_1,\ldots, y_k \in \R^d$ for some $d \in \N$. If $\langle x_i, x_j \rangle = \langle y_i, y_j \rangle$ for all $i,j \in [k]$, then there exists a matrix $U \in O(d)$ such that $U x_i = y_i$ for all $i\in [k]$. 
\end{lemma}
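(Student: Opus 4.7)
The plan is to construct $U$ in two stages: first define it as a linear isometry between the linear spans of the two families, then extend this map to an orthogonal transformation of all of $\R^d$ by matching the orthogonal complements.

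First I would set $V_x := \Span(x_1,\ldots,x_k)$ and $V_y := \Span(y_1,\ldots,y_k)$ and define a map $U_0 : V_x \to V_y$ by $U_0\bigl(\sum_i \alpha_i x_i\bigr) := \sum_i \alpha_i y_i$. The crucial step is to verify that $U_0$ is well-defined: whenever $\sum_i \alpha_i x_i = 0$, the assumption on inner products gives
\[
\Bigl\|\sum_i \alpha_i y_i\Bigr\|^2 = \sum_{i,j} \alpha_i \alpha_j \langle y_i, y_j\rangle = \sum_{i,j} \alpha_i \alpha_j \langle x_i, x_j\rangle = \Bigl\|\sum_i \alpha_i x_i\Bigr\|^2 = 0,
\]
so $\sum_i \alpha_i y_i = 0$ as well. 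The same computation, applied to arbitrary linear combinations, shows that $U_0$ preserves inner products and hence is a linear isometry from $V_x$ onto $V_y$. In particular, $\dim V_x = \dim V_y$ (this also follows from the fact that the common Gram matrix has rank equal to each of these dimensions).

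Next I would extend $U_0$ to all of $\R^d$. Since $\dim V_x^\perp = d - \dim V_x = d - \dim V_y = \dim V_y^\perp$, I can pick orthonormal bases $e_1,\ldots,e_m$ of $V_x^\perp$ and $f_1,\ldots,f_m$ of $V_y^\perp$ and define $U$ to agree with $U_0$ on $V_x$ and to send $e_\ell \mapsto f_\ell$ for $\ell = 1,\ldots,m$. By construction $U$ is linear, sends an orthonormal basis of $\R^d$ to an orthonormal basis of $\R^d$, and so $U \in O(d)$. Finally $U x_i = U_0 x_i = y_i$ for each $i \in [k]$.

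There is no real obstacle here; the only point requiring attention is the well-definedness of $U_0$, which is the content of the equality of Gram matrices. Everything else is dimension counting and the standard extension of an isometry between subspaces to a global orthogonal map.
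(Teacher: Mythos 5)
Your proof is correct and follows essentially the same approach as the paper: the same key computation shows that the families share the same linear dependencies (the paper uses it to reduce to a basis, you use it to show $U_0$ is well-defined on the span), and both extend to all of $\R^d$ by matching orthonormal bases of the orthogonal complements.
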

\begin{proof}
We may assume that both sets $\{x_1,\ldots,x_k\}$ and $\{y_1,\ldots, y_k\}$ are linearly independent (since, for any $\lambda\in \R^k$, $\sum_i \lambda_ix_i=0$ if and only if $\sum_i\lambda_iy_i=0$, as $\|\sum_i \lambda_ix_i\|^2=\|\sum_i\lambda_i y_i\|^2$). We may also assume that $k=d$ (else consider vectors $x_{k+1},\ldots,x_d\in \R^d$ forming an orthonormal basis of $\text{Span}(x_1,\ldots,x_k)^\perp$ and analogously for the $y_i$'s). Now it follows from the assumption $(\langle x_i,x_j\rangle)_{i,j=1}^d=(\langle y_i,y_j\rangle)_{i,j=1}^d$ that the linear map $U$ such that $U x_i=y_i$ for $i\in [d]$ is orthogonal.
\end{proof}

Throughout we let $e$ denote the all-ones vector (of appropriate size) and set $[n]=\{1,\ldots,n\}$ for any integer $n\in\N$. For an integer $t$, we use the shorthand notation $\binom{n}{\leq t}$ for $\{ S \subseteq [n] : |S|\leq t\}$. In what follows we use tensors, matrices, and vectors indexed by tuples $(i_1,\ldots,i_t)\in [n]^t$. We will use the notation $\ind{i}{t}$ to denote such a tuple, whose length (here $t$) will be clear from the context, and we let $\ind{i}{} \, \ind{j}{}=(i_1,\ldots,i_t,j_1,\ldots,j_s)$  denote the concatenation of two tuples $\ind{i}{}=(i_1,\ldots,i_t)$ and $\ind{j}{}=(j_1,\ldots,j_s)$. We may view a tensor $T \in \R^{n \times \cdots \times n}$ either as a map from $[n] \times \cdots \times [n]$ to $\R$ given by $\ind{i}{t} \mapsto T_{\ind{i}{}}$, or as a multilinear form on $\R^n \times \cdots \times \R^n$ given by $(z_1,\ldots,z_t)\mapsto T(z_1,\ldots,z_t) = \sum_{i_1,\ldots, i_t=1}^{n} T_{i_1,\ldots, i_t} z_1(i_1) \cdots z_t(i_t)$. We use the $(n+1)$-dimensional \emph{Lorentz cone} $\mathcal L^{n+1}$, defined by $\mathcal L^{n+1} = \{ (w,v) \in \R \times \R^{n} : w \geq ||v||_2\}$.

\section{Semidefinite programs for the completely bounded norm}\label{secSDP}

In this section we provide  semidefinite programming reformulations of the completely bounded norm of a tensor. We first explain the main idea for building such a program, which essentially follows by using an adaptation of Lemma~\ref{lem:elementaryfact}, and then we indicate how to design a more economical SDP, using smaller matrices and less constraints.

\subsection{Basic construction for a semidefinite programming formulation}

Recall that the operator norm of a matrix $A$ is defined by $||A|| = \max_{v: ||v||=1} ||Av||$, or, equivalently, by $||A|| = \sup_{u,v: ||u|| = ||v|| = 1} \langle u, Av\rangle$. Using the latter definition we can reformulate the completely bounded norm $\|T\|_{\mathrm{cb}}$ of a $t$-tensor $T$ as the optimal value of the following program:
\begin{align}
\sup\Big\{\smash{\sum_{i_1,\ldots,i_{t} = 1}^n} T_{i_1,\ldots,i_{t}} \, \langle u, U_1(i_1)\cdots U_{t}(i_{t}) v \rangle : \  &d \in \N, \ u,v \in \R^d \text{ unit}, \  U_j(i) \in O(d) \text{ for } i \in [n], j \in [t] \Big\} \label{eqdef2}
\end{align}
We now show how to use Lemma~\ref{lem:elementaryfact} to characterize vectors of the form $U_1(i_1)\cdots U_{t}(i_{t}) v$, where $v$ is a unit vector and $U_j(i)$ are orthogonal matrices, in terms of their Gram matrix. 

\begin{lemma} \label{lem1}
Let $\{v_{\ind{i}{t}}\}_{\ind{i}{t} = (i_1,\ldots, i_t) \in [n]^t}$ be a set of unit vectors in $\R^d$. 
There exist orthogonal matrices $U_j(i) \in O(d)$ for $j \in [t], i \in [n]$ and a unit vector $v \in \R^d$ such that 
\[
v_{\ind{i}{t}} = U_1(i_1)\cdots U_t(i_t) v \qquad \text{for all } \ind{i}{t} = (i_1,\ldots,i_t) \in [n]^t,
\]
if and only if 
\begin{equation} \label{cond}
\langle v_{\ind{i}{\ell}\, \ind{j}{t-\ell}},  v_{\ind{i}{\ell}\, \ind{k}{t-\ell}}\rangle = \langle v_{\ind{i}{\ell}'\, \ind{j}{t-\ell}},  v_{\ind{i}{\ell}'\, \ind{k}{t-\ell}}\rangle \quad \text{for all } \ell \in [t-1], \text{ and indices }
\ind{i}{\ell}, \ind{i}{\ell}' \in [n]^\ell, \ind{j}{t-\ell}, \ind{k}{t-\ell} \in [n]^{t-\ell}. 
\end{equation}
\end{lemma}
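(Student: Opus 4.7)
The plan is to prove both implications, with the reverse direction handled by induction on $t$.

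For the forward direction, I would substitute $v_{\ind{i}{t}} = U_1(i_1)\cdots U_t(i_t) v$ directly into both sides of~\eqref{cond}. Inside the inner product $\langle v_{\ind{i}{\ell}\,\ind{j}{t-\ell}}, v_{\ind{i}{\ell}\,\ind{k}{t-\ell}}\rangle$, the common prefix $U_1(i_1)\cdots U_\ell(i_\ell)$ is orthogonal, so it cancels and leaves $\langle U_{\ell+1}(j_1)\cdots U_t(j_{t-\ell})v,\,U_{\ell+1}(k_1)\cdots U_t(k_{t-\ell})v\rangle$, which depends only on $\ind{j}{t-\ell}$ and $\ind{k}{t-\ell}$, not on $\ind{i}{\ell}$. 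The two sides of~\eqref{cond} are therefore equal.

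For the reverse direction, I would induct on $t$. The base case $t=1$ is immediate: condition~\eqref{cond} is vacuous; setting $v := v_1$, the $1 \times 1$ Gram matrices of $\{v\}$ and $\{v_i\}$ are both $(1)$, so Lemma~\ref{lem:elementaryfact} produces $U_1(i) \in O(d)$ with $U_1(i) v = v_i$ for each $i \in [n]$.

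For the inductive step, assume the statement at parameter $t-1$ and consider the slice $w_{\ind{j}{t-1}} := v_{1,\ind{j}{t-1}}$. These are unit vectors and they inherit~\eqref{cond} at parameter $t-1$: the required equality for $\ell' \in [t-2]$ and prefixes $\ind{a}{\ell'},\ind{a}{\ell'}' \in [n]^{\ell'}$ is exactly the original~\eqref{cond} at $\ell = \ell'+1 \in [t-1]$ applied to the length-$(\ell'+1)$ prefixes $(1,\ind{a}{\ell'})$ and $(1,\ind{a}{\ell'}')$. By the inductive hypothesis there exist $U'_1(j),\ldots,U'_{t-1}(j) \in O(d)$ and a unit vector $v \in \R^d$ with $w_{\ind{j}{t-1}} = U'_1(j_1)\cdots U'_{t-1}(j_{t-1})v$, and I set $U_{k+1}(j) := U'_k(j)$ for $k \in [t-1]$. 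To build $U_1(i_1)$ for each $i_1 \in [n]$, I invoke~\eqref{cond} at $\ell = 1$ with the two prefixes $1$ and $i_1$: it says exactly that the Gram matrices of $\{v_{1,\ind{j}{t-1}}\}_{\ind{j}{t-1}}$ and $\{v_{i_1,\ind{j}{t-1}}\}_{\ind{j}{t-1}}$ coincide, so Lemma~\ref{lem:elementaryfact} yields $U_1(i_1) \in O(d)$ sending $v_{1,\ind{j}{t-1}}$ to $v_{i_1,\ind{j}{t-1}}$ for every $\ind{j}{t-1}$. Combining, $v_{i_1,\ind{j}{t-1}} = U_1(i_1)U_2(j_1)\cdots U_t(j_{t-1})v$, as required.

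The only real bookkeeping issue is verifying the inheritance of~\eqref{cond} under the slicing $w_{\ind{j}{t-1}} := v_{1,\ind{j}{t-1}}$; once that is set up, both appeals to Lemma~\ref{lem:elementaryfact} are one-line applications, and the construction preserves the ambient dimension $d$ (which will be important for the SDP formulation to follow).
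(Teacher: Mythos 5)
Your proof is correct and follows essentially the same route as the paper: the forward direction by cancelling the common orthogonal prefix, and the reverse direction by induction on $t$, slicing along the first index and invoking Lemma~\ref{lem:elementaryfact} at $\ell=1$ to produce $U_1(i_1)$. The only (immaterial) difference is the order of the two steps in the inductive case — you apply the induction hypothesis to the slice before constructing $U_1(i_1)$, while the paper does the reverse — and your explicit bookkeeping check of how condition~\eqref{cond} restricts to the slice is a welcome, if minor, addition.
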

\begin{proof}
The  `only if' part is easy: indeed  for any indices $\ind{i}{}\in [n]^\ell$ and 
$\ind{j}{}=(j_{\ell+1},\ldots,j_t),$ $ \ind{k}{}=(k_{\ell+1},\ldots,k_t)\in [n]^{t-\ell}$ we have
\[
\langle v_{\ind{i}{\ell} \, \ind{j}{t-\ell}},  v_{\ind{i}{\ell} \, \ind{k}{t-\ell}} \rangle = \langle U_{\ell+1}(j_{\ell+1})\cdots U_t(j_{t}) v, U_{\ell+1}(k_{\ell+1})\cdots U_t(k_{t}) v \rangle.
\]
We show the `if' part by induction on $t\geq1$. Assume first  $t=1$ (in which case  condition (\ref{cond}) is  void). 
By assumption, the vectors $v_1,\ldots, v_n$ are unit vectors.
Then pick a unit vector $v\in \R^d$ and for each $i\in [n]$ let $U(i)\in O(d)$ be such that $U(i) v = v_i$ (which exists by Lemma \ref{lem:elementaryfact}).  
Assume now that $t \geq 2$. 
Fix the index $1\in [n]$ and for any $i_1 \in [n]\setminus \{1\}$ consider the two  sets of vectors 
\[
\{v_{1 \, \ind{i}{t-1}}: \ind{i}{t-1} \in [n]^{t-1}\} \qquad \text{and} \qquad \{v_{i_1 \, \ind{i}{t-1}}: \ind{i}{t-1} \in [n]^{t-1}\}.
\] Observe that it follows from condition~\eqref{cond} (case $\ell=1$) that 
\[
\langle v_{1 \, \ind{j}{t-1}}, v_{1 \, \ind{k}{t-1}} \rangle = \langle v_{i_1 \, \ind{j}{t-1}}, v_{i_1 \, \ind{k}{t-1}} \rangle \qquad \text{ for all } \ind{j}{t-1},\ind{k}{t-1} \in [n]^{t-1}.
\]
Hence we may apply Lemma~\ref{lem:elementaryfact}: there exists an orthogonal matrix $U_1(i_1) \in O(d)$ such that 
$$v_{i_1 \, \ind{i}{t-1}} = U_1(i_1) v_{1 \, \ind{i}{t-1}} \quad \text{ for all }  \ind{i}{}\in [n]^{t-1}.$$
 We can now apply the induction hypothesis to the vectors $v_{1\,\ind{i}{t-1}}$ ($\ind{i}{t-1} \in [n]^{t-1}$). Since they satisfy~\eqref{cond} (with $t$ replaced by $t-1$) it follows that there exist orthogonal matrices $U_2(i_2),\ldots,U_t(i_t) \in O(d)$ and a unit vector $v \in \R^d$ such that 
$$ v_{1\,\ind{i}{t-1}} = U_2(i_2)\cdots U_t(i_t) v \quad \text{ for all }  \ind{i}{}\in[n]^{t-1}.$$
Combining the above two relations we obtain
\[
v_{i_1,\ldots,i_t} = U_1(i_1) v_{1\, \ind{i}{t-1}} = U_1(i_1) \cdots U_t(i_t) v \qquad \text{ for all } \ind{i}{t-1} \in [n]^{t-1}.  
\]
This concludes the proof. 
\end{proof}

We are now ready to give an equivalent SDP formulation for the program~\eqref{eqdef2}. 
First, in  view of Lemma \ref{lem1}, we can  rewrite~\eqref{eqdef2} as 
\begin{equation}\label{eqdef2b}
\sup\Big\{ \sum_{\ind{i}{}\in [n^t]} T_{\ind{i}{}} \, \langle u, v_{\ind{i}{}}\rangle: d\in \N, v_{\ind{i}{}}\in \R^d \text{ unit vectors satisfying } (\ref{cond})\Big\}.
\end{equation}
Consider now the Gram matrix of the vectors $u, v_{\ind{i}{}}$ (for $\ind{i}{}\in [n]^t$):
\[
X = \mathrm{Gram}(\{u\}, \{v_{\ind{i}{t}}\}_{\ind{i}{t} \in [n]^t}) \in \mathcal S^{1 + n^t}_+.
\]
Let $A_1,\ldots, A_{m_0} \in \mathcal S^{1 + n^t}$ be  matrices such that the linear constraints $\mathrm{Tr}(A_i X) = 0$ (for $i \in [m_0]$)  enforce  condition~\eqref{cond} on $X$ (namely, the fact that the entry $X_{\ind{i}{}\, \ind{j}{}, \ind{i}{}\,\ind{k}{}}$ does not depend on the choice of $\ind{i}{}$), and define the operator $\mathcal A_0(X)=(\Tr(A_1 X),\ldots,\Tr(A_{m_0} X))$. 
The number of these linear constraints is $m_0 = \sum_{\ell=1}^{t-1} (n^\ell-1) \binom{n^{t-\ell}}{2} \leq \binom{n^t}{2}$,  where the last inequality follows from the fact that each entry in the bottom-right $n^t \times n^t$ principal submatrix of $X$ appears in at most one equation. In addition, let $C_0(T) \in \mathcal S^{1+n^t}$ be the block-matrix whose first diagonal block is indexed by $0$ (corresponding to $u$) and whose second diagonal block is indexed by the tuples $\ind{i}{t} \in [n]^t$, defined as follows:
\begin{equation}\label{eqCT0}
C_0(T) = \frac{1}{2}\left(\begin{array}{c|c} 0 & \ldots \ T_{\ind{i}{t}}\  \ldots \\ \hline  \vdots & \\ T_{\ind{i}{t}} & { 0} \\ \vdots & \end{array}\right).
\end{equation}
It follows that
\[
\langle C_0(T),X\rangle =\sum_{\ind{i}{t} \in [n]^t} T_{\ind{i}{t}}\  X_{0, \ind{i}{t}}=\sum_{\ind{i}{t} \in [n]^t} T_{\ind{i}{t}} \ \langle u, v_{\ind{i}{}}\rangle
\]
is precisely the objective function in the program (\ref{eqdef2b}) (and thus of (\ref{eqdef2})).
Consider now the following pair of primal/dual semidefinite programs:
\begin{align} \label{eq:sdpcbt}
 \max \quad & \langle C_0(T), X \rangle &  \min \quad & \langle e, \lambda\rangle  \\ \notag
\text{s.t.} \quad& X \in \mathcal S_+^{1+n^t} &\text{s.t.} \quad&\lambda \in \R^{1+n^t}, y \in \R^{m_0} \\ \notag
& \mathrm{diag}(X) = e, \ \mathcal A_0(X) = 0 \qquad &&\mathrm{Diag}(\lambda) + \mathcal A_0^*(y) - C_0(T) \in \mathcal S^{1+n^t}_+ 
\end{align}
It follows from the above discussion that the optimal value of the primal problem equals $||T||_\mathrm{cb}$. Observe that both the primal and the dual are strictly feasible (for the primal the identity matrix provides a strictly feasible solution). Hence, strong duality holds and the optima in both primal and dual are equal and attained (justifying the use of max and min). In other words this shows:

\begin{theorem}
The completely bounded norm $\|T\|_{\mathrm{cb}}$ of a $t$-tensor $T$ acting on $\R^n$  is given by any of the two semidefinite programs in~\eqref{eq:sdpcbt}. Moreover, the supremum in definition~\eqref{eqdef} is attained and one may restrict the optimization to size $d \leq 1+n^t$. 
\end{theorem}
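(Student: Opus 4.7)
The plan is to formalize the chain of equivalences that was already set up in the preceding discussion and then invoke the strong duality conditions from Section~2. Concretely, three things need to be checked: (i) the primal optimum of~\eqref{eq:sdpcbt} equals the value of~\eqref{eqdef2b} (which in turn equals $\|T\|_{\mathrm{cb}}$), (ii) strong duality holds so both optima are attained and coincide, and (iii) the primal optimum can be realized by a Gram configuration in dimension $d\leq 1+n^t$, yielding the size bound in the definition.

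For (i), I would verify both inequalities. Given a feasible configuration $(d,u,v,U_j(i))$ for~\eqref{eqdef2}, set $v_{\ind{i}{t}}=U_1(i_1)\cdots U_t(i_t)v$ and take $X=\mathrm{Gram}(\{u\},\{v_{\ind{i}{t}}\})\in\mathcal S_+^{1+n^t}$. Then $\mathrm{diag}(X)=e$ because $u$ and each $v_{\ind{i}{t}}$ is a unit vector, and condition~\eqref{cond} on the inner products $\langle v_{\ind{i}{\ell}\,\ind{j}{t-\ell}},v_{\ind{i}{\ell}\,\ind{k}{t-\ell}}\rangle = \langle U_{\ell+1}(j_{\ell+1})\cdots U_t(j_t)v,\,U_{\ell+1}(k_{\ell+1})\cdots U_t(k_t)v\rangle$ holds trivially since the right-hand side is independent of $\ind{i}{\ell}$; thus $\mathcal A_0(X)=0$, and $\langle C_0(T),X\rangle$ coincides with the objective of~\eqref{eqdef2}. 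Conversely, given a primal-feasible $X$, factor $X=\mathrm{Gram}(u,\{v_{\ind{i}{t}}\})$ with vectors in $\R^d$ for some $d\leq 1+n^t$; the diagonal constraint makes all of them unit vectors, $\mathcal A_0(X)=0$ translates to condition~\eqref{cond}, so Lemma~\ref{lem1} supplies orthogonal matrices $U_j(i)$ and a unit vector $v$ for which $v_{\ind{i}{t}}=U_1(i_1)\cdots U_t(i_t)v$. This produces a feasible point of~\eqref{eqdef2} (with $u$ as the left unit vector) achieving the same objective value $\langle C_0(T),X\rangle$.

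For (ii), strict primal feasibility is witnessed by the identity matrix $X=I\succ 0$: it has unit diagonal, and every equation encoded by $\mathcal A_0$ relates either two equal diagonal entries (both equal to $1$) or two off-diagonal entries (both equal to $0$), so $\mathcal A_0(I)=0$. The primal feasible region is bounded since any PSD matrix with unit diagonal satisfies $|X_{i,j}|\leq 1$ by Cauchy--Schwarz, hence $\langle X,X\rangle\leq (1+n^t)^2$. The strong duality statement recalled in Section~2 then yields equality and attainment in both primal and dual. Finally, for (iii), any primal optimum $X^*\in\mathcal S_+^{1+n^t}$ has rank at most $1+n^t$, so admits a Gram factorization in $\R^d$ with $d\leq 1+n^t$; applying the converse direction of (i) to $X^*$ yields unit vectors $u,v$ and orthogonal matrices $U_j(i)\in O(d)$ attaining $\|T\|_{\mathrm{cb}}$ in this dimension. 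I do not anticipate a genuine obstacle: the entire argument is a transcription of Lemma~\ref{lem1} into SDP language, and the only point deserving a moment's care is checking that the constraints $\mathcal A_0(X)=0$ are satisfied by the identity, which is immediate.
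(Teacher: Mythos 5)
Your proposal is correct and mirrors the paper's own argument, which appears as the discussion immediately preceding the theorem: the equivalence between~\eqref{eqdef2} and the primal SDP via Lemma~\ref{lem1} and the Gram-matrix translation, strict feasibility of the primal at $X=I$ together with boundedness of the feasible region to invoke the strong-duality fact recalled in Section~2, and the rank bound on a PSD matrix of size $1+n^t$ to obtain the dimension bound $d\leq 1+n^t$. No substantive differences.
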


We also observe that the primal SDP in (\ref{eq:sdpcbt}) involves a matrix of size $O(n^t)$ and has $O(n^{2t})$ linear constraints. Hence,  for fixed $t$, the optimal values can be approximated up to an additive error $\epsilon$ in time polynomial in $n$ and $\log(1/\epsilon)$.

\subsection{Reducing the size of the semidefinite program}

To obtain  a more efficient semidefinite programming representation of the completely bounded norm of a $t$-tensor, we fix an integer $s\in [t]$ and use the following observation: 
\[
\langle u, U_1(i_1)\cdots U_{t}(i_{t}) v \rangle = \langle U_s(i_s)^* \cdots U_1(i_1)^* u, U_{s+1}(i_{s+1})\cdots U_{t}(i_{t}) v \rangle.
\]
We characterized in Lemma \ref{lem1} the vectors of the form $U_{s+1}(i_{s+1})\cdots U_{t}(i_{t}) v$, where $v\in \R^d$ is a unit vector and $U_j(i)\in O(d)$ for $i\in [n]$ and $j\in \{s+1,\ldots,t\}$, as the unit vectors $v_{\ind{b}{}}\in\R^d$ (for $\ind{b}{}\in [n]^{t-s}$) satisfying the condition:
\begin{equation}\label{condv}
\langle v_{\ind{i}{\ell}\, \ind{j}{t-\ell}},  v_{\ind{i}{\ell}\, \ind{k}{t-\ell}}\rangle = \langle v_{\ind{i}{\ell}'\, \ind{j}{t-\ell}},  v_{\ind{i}{\ell}'\, \ind{k}{t-\ell}}\rangle \quad \textit{for all } \ell \in [t-s-1], \textit{ and indices }
\ind{i}{\ell}, \ind{i}{\ell}' \in [n]^\ell, \ind{j}{t-\ell}, \ind{k}{t-\ell} \in [n]^{t-s-\ell}. 
\end{equation}
Analogously to Lemma \ref{lem1} we have the following characterization for the vectors of the form $U_s(i_s)^* \cdots U_1(i_1)^* u$, where 
$u\in \R^d$ is a a unit vector and $U_j(i)\in O(d)$ for $i\in [n]$ and $j\in [s]$. (Note that $U^*\in O(d)$ if and only if $U\in O(d)$).

\begin{lemma}
Let $\{u_{\ind{a}{t}}\}_{\ind{a}{t} = (i_1,\ldots, i_s) \in [n]^s}$ be a set of unit vectors in $\R^d$. There exist orthogonal matrices $U_j(i) \in O(d)$ for $j \in [s], i \in [n]$ and a unit vector $u \in \R^d$ such that 
\[
u_{\ind{a}{t}} = U_s(i_s)\cdots U_1(i_1) u \qquad \text{for all } \ind{a}{s} = (i_1,\ldots,i_s) \in [n]^s,
\]
if and only if 
\begin{equation} \label{cond2}
\langle u_{\ind{j}{\ell}\, \ind{i}{t-\ell}},  u_{\ind{k}{\ell}\, \ind{i}{t-\ell}}\rangle = \langle u_{\ind{j}{\ell}\, \ind{i}{t-\ell}'},  u_{\ind{k}{\ell}\, \ind{i}{t-\ell}'}\rangle \quad \text{for all } \ell \in [s-1], \text{ and indices }
\ind{i}{\ell}, \ind{i}{\ell}' \in [n]^\ell, \ind{j}{t-\ell}, \ind{k}{t-\ell} \in [n]^{s-\ell}.
\end{equation}
\end{lemma}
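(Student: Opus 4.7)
The plan is to mirror the proof of Lemma~\ref{lem1} closely, noting that the present statement is essentially the ``reverse-order'' analogue: now the leftmost factor in the product $U_s(i_s)\cdots U_1(i_1) u$ corresponds to the \emph{last} index $i_s$ of the tuple, rather than the first as in Lemma~\ref{lem1}. The ``only if'' direction is immediate: given such a representation, for any $\ell \in [s-1]$, the inner product $\langle u_{\ind{j}{s-\ell}\,\ind{i}{\ell}}, u_{\ind{k}{s-\ell}\,\ind{i}{\ell}}\rangle$ equals $\langle U_s(i_\ell)\cdots U_{\ell+1}(i_1) U_\ell(j_{s-\ell})\cdots U_1(j_1) u, U_s(i_\ell)\cdots U_{\ell+1}(i_1) U_\ell(k_{s-\ell})\cdots U_1(k_1) u\rangle$, and the common leftmost orthogonal factor $U_s(i_\ell)\cdots U_{\ell+1}(i_1)$ cancels, so the expression does not depend on $\ind{i}{\ell}$, giving condition~\eqref{cond2}.

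For the ``if'' direction, I would induct on $s$. The base case $s=1$ is trivial: condition~\eqref{cond2} is vacuous, so one picks any unit vector $u \in \R^d$ and invokes Lemma~\ref{lem:elementaryfact} to obtain $U_1(i) \in O(d)$ with $U_1(i) u = u_i$ for each $i \in [n]$. For the inductive step $s \geq 2$, I would peel off the \emph{last} coordinate (since it controls the leftmost unitary $U_s$). Fix the reference index $1 \in [n]$. Applying~\eqref{cond2} with $\ell = 1$ yields
\[
\langle u_{\ind{j}{s-1}\,1}, u_{\ind{k}{s-1}\,1}\rangle = \langle u_{\ind{j}{s-1}\,i_s}, u_{\ind{k}{s-1}\,i_s}\rangle \qquad \text{for all } \ind{j}{s-1},\ind{k}{s-1} \in [n]^{s-1},\ i_s \in [n],
\]
so Lemma~\ref{lem:elementaryfact} provides, for each $i_s \in [n]$, an orthogonal matrix $U_s(i_s)$ with $u_{\ind{j}{s-1}\,i_s} = U_s(i_s)\, u_{\ind{j}{s-1}\,1}$ for every $\ind{j}{s-1}$ (taking $U_s(1) = I$). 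It then remains to apply the induction hypothesis to the collection $\{u_{\ind{j}{s-1}\,1} : \ind{j}{s-1} \in [n]^{s-1}\}$ of unit vectors; one checks that it satisfies~\eqref{cond2} at parameter $s-1$ by specializing the original~\eqref{cond2} at parameter $s$ to tuples with trailing coordinate $1$ (i.e., shifting $\ell$ up by one). This yields orthogonal matrices $U_1(i),\ldots,U_{s-1}(i)$ and a unit vector $u$ with $u_{\ind{j}{s-1}\,1} = U_{s-1}(j_{s-1})\cdots U_1(j_1)u$, and combining with the peeling step gives $u_{\ind{j}{s-1}\,i_s} = U_s(i_s) U_{s-1}(j_{s-1})\cdots U_1(j_1)u$, as required.

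There is no genuine new difficulty beyond Lemma~\ref{lem1}; the only thing to watch is the index bookkeeping, since here one must peel from the right rather than the left, which dictates that the correct instance of~\eqref{cond2} to invoke at the peeling step uses $\ell = 1$ (freeing the last coordinate) rather than $\ell = 1$ freeing the first coordinate as in the proof of Lemma~\ref{lem1}. Once the roles of the positions of the tuple are tracked correctly, the argument is entirely parallel.
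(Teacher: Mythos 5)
Your proof is correct and follows exactly the route the paper intends (the paper gives no proof, saying only ``analogously to Lemma~\ref{lem1}''), replaying the same induction but peeling the \emph{last} coordinate rather than the first, and correctly verifying that condition~\eqref{cond2} at parameter $s-1$ for the sub-family $\{u_{\ind{j}{}\,1}\}$ follows from~\eqref{cond2} at parameter $s$ by shifting $\ell \mapsto \ell+1$. The only flaw is a cosmetic index slip in the ``only if'' direction: the common left orthogonal factor should be $U_s(i_\ell)\cdots U_{s-\ell+1}(i_1)$ and the remaining product $U_{s-\ell}(j_{s-\ell})\cdots U_1(j_1)$, whereas you wrote $U_{\ell+1}(i_1)$ and $U_\ell(j_{s-\ell})$ in the middle — this does not affect the cancellation argument.
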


We can now rewrite the program~\eqref{eqdef2} as an SDP using matrices of size $n^s + n^{t-s}$. 
Indeed, by the above, program (\ref{eqdef2}) can be equivalently rewritten as 
\begin{equation}\label{eqdef2c}
\begin{array}{ll}
\sup\Big\{\smash{\sum_{\ind{a}{}\in [n]^s,\ind{b}{}\in [n]^{t-s}} T_{\ind{a}{}\,\ind{b}{}} \,\langle u_{\ind{a}{}}, v_{\ind{b}{}}\rangle }: \ \ &v_{\ind{b}{}}\in\R^d \text{ unit vectors satisfying}~\eqref{condv}, \\
&u_{\ind{a}{}} \in \R^d \text{ unit vectors satisfying}~\eqref{cond2}\ \ \Big\}.
\end{array}
\end{equation}
Consider now the Gram matrix of the vectors $\{u_{\ind{a}{}}\}$ and $\{v_{\ind{b}{}}\}$:
\[
X = \mathrm{Gram}(\{u_{\ind{a}{s}}\}_{\ind{a}{s} \in [n]^s}, \{v_{\ind{b}{t-s}}\}_{\ind{b}{t-s} \in [n]^{t-s}}),
\]
let $A_1,\ldots, A_{m_s} \in \mathcal S^{n^s + n^{t-s}}$ be matrices such that the linear constraints $\mathrm{Tr}(A_i X) = 0$ (for $i \in [m_s]$) enforce  the conditions~\eqref{condv} and~\eqref{cond2}, and define the operator $\mathcal A_s(X)=(\Tr(A_1 X),\ldots, \Tr(A_{m_s} X))$. 
Observe that $X$ has size $n^s+n^{t-s}$, which is minimized when selecting $s=\lfloor t/2\rfloor$.
Moreover, the number of linear constraints satisfies 
\begin{equation} \label{eq:mefficient}
m_s = \sum_{\ell=1}^{s-1} (n^\ell-1) \binom{n^{s-\ell}}{2} + \sum_{\ell=1}^{t-s-1} (n^\ell-1) \binom{n^{t-s-\ell}}{2} \leq \binom{n^s}{2}+\binom{n^{t-s}}{2} < \binom{n^t}{2}.
\end{equation}
Let $C_s(T)  \in \mathcal S^{n^s+n^{t-s}}$ be the block-matrix whose first diagonal block is indexed by tuples $\ind{a}{} \in [n]^{s}$ and whose second diagonal block is indexed by tuples $\ind{b}{t} \in [n]^{t-s}$, given by 
\begin{equation}\label{eqCTs}
C_s(T)= \frac{1}{2} \begin{pmatrix} 0 & M(T) \\ M(T)^* & 0 \end{pmatrix},
\end{equation}
where $M(T)\in\R^{n^s\times n^{t-s}}$ has entries $M(T)_{\ind{a}{t},\ind{b}{t}} := T_{\ind{a}{t} \, \ind{b}{}}$. 
Note that  when selecting $s=0$  the matrix in (\ref{eqCTs}) coincides with the matrix in (\ref{eqCT0}).
It follows that 
\[
\langle C_s(T),X\rangle =\sum_{\ind{a}{s} \in [n]^s,\, \ind{b}{t-s} \in [n]^{t-s}} T_{\ind{a}{s}\,\ind{b}{t-s}} X_{\ind{a}{s}, \ind{b}{t-s}}=
 \sum_{\ind{a}{s} \in [n]^s,\, \ind{b}{t-s} \in [n]^{t-s}} T_{\ind{a}{s}\,\ind{b}{t-s}} \,\langle u_{\ind{a}{}}, v_{\ind{b}{}}\rangle
\]
is  the objective function of the program (\ref{eqdef2c}) (and thus of (\ref{eqdef2})).
Then we can define the pair of primal/dual semidefinite programs
\begin{align} \label{eq:sdpcb}
\max \quad & \langle C_s(T), X \rangle & \min \quad & \langle e, \lambda\rangle  \\ \notag
\text{s.t.} \quad& X \in \mathcal S_+^{n^s+n^{t-s}} &\text{s.t.} \quad&\lambda \in \R^{n^s+n^{t-s}}, y \in \R^m \\ \notag
& \mathrm{diag}(X) = e, \ \mathcal A_s(X) = 0 \qquad &&\mathrm{Diag}(\lambda) + \mathcal A_s^*(y) - C_s(T) \in \mathcal S^{n^s+n^{t-s}}_+ 
\end{align}
whose optimal values provide as before the completely bounded norm  $\|T\|_{\mathrm{cb}}$.

\begin{theorem}\label{thm2}
The completely bounded norm $\|T\|_{\mathrm{cb}}$ of a $t$-tensor $T$ acting on $\R^n$  is given by any of the two semidefinite programs in~\eqref{eq:sdpcb}. Moreover, the supremum in definition~\eqref{eqdef} is attained and one may restrict the optimization to size $d \leq n^s + n^{t-s}$ for any integer $s\in [t]$.
\end{theorem}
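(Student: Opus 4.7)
The plan mirrors the derivation of the basic SDP \eqref{eq:sdpcbt} but splits the product $U_1(i_1)\cdots U_t(i_t)$ at position~$s$. Most ingredients are already in place in the discussion preceding the theorem: the rewriting \eqref{eqdef2c} of the supremum, the construction of $\mathcal A_s$ intended to encode conditions \eqref{condv} and \eqref{cond2}, and the cost matrix $C_s(T)$ chosen so that $\langle C_s(T),X\rangle$ coincides with the objective of \eqref{eqdef2c}. What remains is to connect feasible points of the primal SDP in \eqref{eq:sdpcb} to feasible tuples of \eqref{eqdef2c}, to justify strong duality, and to read off the dimension bound from an optimal primal solution.

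For the first point, I would take any primal-feasible $X\in \mathcal S_+^{n^s+n^{t-s}}$ with $\diag(X)=e$ and factor it as a Gram matrix $X=\mathrm{Gram}(\{u_{\ind{a}{s}}\},\{v_{\ind{b}{t-s}}\})$ of unit vectors in $\R^d$ for $d=\rank(X)\leq n^s+n^{t-s}$. The constraints $\mathcal A_s(X)=0$ are precisely those enforcing that these vectors satisfy \eqref{condv} and \eqref{cond2}. Applying Lemma~\ref{lem1} to the $v$-block and its analog (stated just before the theorem) to the $u$-block produces unit vectors $u,v\in\R^d$ and orthogonal matrices $U_j(i)\in O(d)$ with $u_{\ind{a}{}}=U_s(i_s)\cdots U_1(i_1)u$ and $v_{\ind{b}{}}=U_{s+1}(i_{s+1})\cdots U_t(i_t)v$. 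The splitting identity $\langle u,U_1(i_1)\cdots U_t(i_t)v\rangle=\langle U_s(i_s)^*\cdots U_1(i_1)^*u,\,U_{s+1}(i_{s+1})\cdots U_t(i_t)v\rangle$, together with the fact that $O(d)$ is closed under transpose, then shows that $\langle C_s(T),X\rangle$ is attained as a value of the supremum \eqref{eqdef}. Conversely, any feasible tuple of \eqref{eqdef} gives rise to a feasible $X$ with matching objective, so the primal optimum of \eqref{eq:sdpcb} equals $\|T\|_{\mathrm{cb}}$.

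For strong duality, the primal is strictly feasible (the identity matrix $I_{n^s+n^{t-s}}$ satisfies $\diag(I)=e$ and $\mathcal A_s(I)=0$, since all constraints involve only off-diagonal entries which vanish on $I$, and $I-\tfrac12 I\succ 0$), and its feasible region is bounded ($X\succeq 0$ together with $\diag(X)=e$ forces $|X_{\alpha,\beta}|\leq 1$, hence $\langle X,X\rangle\leq (n^s+n^{t-s})^2$). By the duality result recalled in Section~2, the primal and dual optima are equal and attained, justifying the use of $\max$ and $\min$ in \eqref{eq:sdpcb}. Taking an optimal primal $X^*$ and running the Gram-factorization argument above in dimension $d=\rank(X^*)\leq n^s+n^{t-s}$ produces matrices and vectors realizing the supremum in \eqref{eqdef} with $d\leq n^s+n^{t-s}$, yielding the last assertion of the theorem.

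The only delicate step is the bookkeeping required to verify that the linear equations collected in $\mathcal A_s$ encode exactly \eqref{condv} and \eqref{cond2}, neither more nor fewer independent conditions. This is implicit in the count of $m_s$ in \eqref{eq:mefficient}, obtained by fixing in each equation a ``reference tuple'' of length $\ell$ and varying the remaining $n^\ell-1$ tuples against $\binom{n^{s-\ell}}{2}$ (respectively $\binom{n^{t-s-\ell}}{2}$) pairs; it is the main place requiring care, but it is entirely routine.
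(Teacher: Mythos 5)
Your proposal follows the same route the paper takes implicitly (the paper states Theorem~\ref{thm2} without a separate proof, relying on the derivation of~\eqref{eqdef2c},~\eqref{condv},~\eqref{cond2} and the SDP~\eqref{eq:sdpcb} in the surrounding text): factor a primal-feasible $X$ as a Gram matrix, use Lemma~\ref{lem1} and its analog to recover orthogonal matrices, invoke the splitting identity, and obtain strong duality from strict feasibility of the identity matrix and boundedness of the primal feasible region. The argument is correct and complete; the only very minor point is that exactness of the count $m_s$ is not actually needed for the theorem (redundant constraints would not change the optimal value, only the size of the program), so the ``delicate bookkeeping'' you flag at the end is irrelevant to correctness and only matters for the complexity claim made after the theorem.
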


If we select   $s=\lfloor t/2\rfloor$ the primal program in (\ref{eq:sdpcb}) involves a matrix variable of size $n^{\lfloor t/2 \rfloor}+n^{\lceil t/2\rceil}$ and it has $O(n^{2 \lceil t/2\rceil})$ affine constraints. This  represents a significant size reduction with respect to the program in (\ref{eq:sdpcbt}) (corresponding to the choice $s=0$), which involves a matrix variable of size $1+n^t$ and $O(n^{2t})$ affine constraints.

\section{SDP characterization of quantum query complexity of Boolean functions} \label{sec:qquerycompl}

In this section  we illustrate the relevance of the above results  through the connection established recently in~\cite{ABP17} between
the completely bounded norm of tensors and the quantum query complexity of Boolean functions. 
After a brief recap on the quantum query complexity of Boolean functions, we give a new SDP characterization for the quantum query complexity $Q_\epsilon(f)$ of a Boolean function $f$. We then compare our SDP to the known SDP characterizations. Finally we use our SDP to derive a new type of certificates for large quantum query complexity: $Q_\epsilon(f)>t$. 

\subsection{Quantum query complexity }

We are given a domain $D \subseteq \cube$ and a Boolean function $f: D \to \{\pm 1\}$. The function is called {\em total} when $D=\{\pm 1\}^n$ and {\em partial} otherwise. 
The  task is to compute the value $f(x)$ for an input $x\in D$ while having access to~$x$ only through some oracle. The objective is to compute $f(x)$ using the smallest possible number of oracle calls on a worst-case input $x$, and the least such number is called the \emph{classical/quantum query complexity} of the function $f$. See for instance~\cite{BW02} for a survey on query complexity and~\cite{ABK16} for a more recent overview of the relation between classical and quantum query complexity.  

In the classical case, the oracle consists of querying the value of the entry $x_i$ for a selected index $i\in [n]$. 
In the quantum case, an oracle query to $x$ is defined as an application of the \emph{phase oracle} $O_x$, which is the diagonal unitary operator acting on $\C^{n+1}$  defined by $O_x=\text{Diag}(x_1,\ldots,x_n,1)$.\footnote{The above described classical oracle can be seen as applying the function $C_x:[n]\times \{\pm 1\} \rightarrow [n] \times \{\pm 1\}$, defined by $(i,b) \mapsto (i,x_i b)$, to the input $(i,1)$, and it is called the \emph{standard bit oracle}. In the quantum setting $C_x$ corresponds to applying the operator $\Diag(x) \oplus - \Diag(x)$. For this section, in the quantum setting it will be more convenient to work with the phase oracle $O_x = \Diag(x,1)$. It is well known that for quantum query algorithms the two oracles are equivalent, see also~\cite{BSS03,AAIKS16}.} 
A {\em $t$-query quantum algorithm} can be described by a Hilbert space $\Hilbert=\C^{n+1}\otimes \C^d$ (for some $d\in\N$), 
 a sequence of unitaries $U_0,\ldots, U_t$ acting on $\Hilbert$, two Hermitian positive semidefinite operators $P_{+1},P_{-1}$ on $\Hilbert$ satisfying $P_{+1}+P_{-1} = I$, and a unit vector $v\in \Hilbert$. The algorithm starts in the state $v$ and alternates between applying a unitary $U_j$ and the oracle $O_x$. The final state of the algorithm on input $x \in D$ is  
\[
\psi_x := U_t (O_x \otimes I_d) U_{t-1} (O_x \otimes I_d) U_{t-2} \cdots U_1 (O_x \otimes I_d)U_0 v.
\]
The algorithm concludes by `measuring' $\psi_x$ with respect to  $\{P_{+1},P_{-1}\}$, which means that it outputs $+1$ with probability $\psi_x^* P_{+1} \psi_x$ and $-1$ with probability $\psi_x^* P_{-1} \psi_x$. 
The expected output of the algorithm is therefore given by 
\begin{equation} \label{eq:outcome}
    \psi_x^* (P_{+1} - P_{-1})  \psi_x.
\end{equation}
Given $\epsilon \geq 0$ the {\em bounded-error quantum query complexity} of $f$, denoted as $Q_\epsilon(f)$, is the smallest number of queries a quantum algorithm must make on the worst-case input $x\in  D$ to compute $f(x)$ with probability at least $1-\epsilon$. Note that for a quantum algorithm which computes $f(x)$ with probability at least $1-\epsilon$ we have $|\psi_x^* (P_{+1} - P_{-1})  \psi_x - f(x)| \leq 2\epsilon$.

Determining the quantum query complexity of a given function $f$ is non-trivial. We mention two lines of work that have been developed in the recent years:
the \emph{polynomial method} from~\cite{BBCMdW98} and  the \emph{adversary method} from~\cite{Amb02}. Both methods have been used to provide lower bounds on the quantum query complexity. 
The adversary method was strengthened in~\cite{HLS07} to the \emph{general adversary method}, which provides a parameter $\mathrm{ADV}^\pm(f)$, satisfying $Q_{\epsilon}(f) = \Theta(\mathrm{ADV}^\pm(f))$ for any fixed $\epsilon \in (0,\frac{1}{2})$~\cite{HLS07,Rei09,Rei11,LMRSS11}.\footnote{To be more precise, for all $\epsilon \in (0,\frac{1}{2})$, we have $\frac{1- 2 \sqrt{\epsilon(1-\epsilon)}}{2} \mathrm{ADV}^\pm(f) \leq Q_\epsilon(f) \leq 100 \frac{\log(1/\epsilon)}{\epsilon^2} \mathrm{ADV}^\pm(f)$, where the first inequality is shown in~\cite{HLS07} and the second one in~\cite{LMRSS11}.} 
The polynomial method has been strengthened only very recently in \cite{ABP17} and is shown there to provide an exact characterization of the quantum query complexity. Since the polynomial method is the most relevant  to our work we explain it in some more detail below. 

\medskip
The polynomial method is based on  the observation made in~\cite{BBCMdW98} that~\eqref{eq:outcome} in fact defines an 
$n$-variate polynomial $p$ with degree at most $2t$ such that $p(x) = \psi_x^* (P_{+1} - P_{-1}) \psi_x$ equals the expected value of the returned sign of the algorithm for an input $x\in D$. This motivated considering the {\em $\epsilon$-approximate degree of $f$}, $\deg_\epsilon(f)$, defined by
\begin{align} \label{eqdeg}
\deg_\epsilon(f) = \min \quad &t \\ \notag
\text{s.t.}\quad & \exists \, n\text{-variate polynomial } p \text{ with } \deg(p)\le t \\ \notag
 & |p(x)-f(x)|\le 2\epsilon \quad\forall x\in D,\\ \notag
 & |p(x)|\le 1+2\epsilon \quad \forall x\in\{\pm 1\}^n.
\end{align}
Then, as shown in~\cite{BBCMdW98}, it follows from the above that the scaled approximate degree of $f$ is a lower bound on $Q_\epsilon(f)$: 
\[
\deg_\epsilon(f)\le 2Q_\epsilon(f).
\]

In~\cite{AA15} (see also~\cite{AAIKS16}) the observation is made that~\eqref{eq:outcome} can be used to define a $2t$-tensor $T \in \R^{(n+1) \times \cdots (n+1)}$ by using different input strings at the successive queries. More precisely, for any $(z_1,\ldots,z_{2t})\in \R^{n+1}\times \ldots \times \R^{n+1}$, we can define
\begin{equation} \label{eq:defT}
    T(z_1,\ldots, z_{2t}) = v^* U_0^* \widetilde O_{z_1} \cdots \widetilde O_{z_t} U_t^* (P_{+1} - P_{-1})  U_t \widetilde O_{z_{t+1}} U_{t-1} \widetilde O_{z_{t+2}} \cdots \widetilde O_{z_{2t}} U_0 v,
\end{equation}
where $\widetilde O_z = O_z \otimes I_d$, so that $T((x,1),\ldots,(x,1))$ equals the expected value of the returned sign of the quantum algorithm for all $x\in D$. Note that $T$ is in fact bounded on the entire hypercube: $T$ satisfies the inequalities $|T(z_1,\ldots, z_{2t})| \leq 1$ for all $z_1,\ldots, z_{2t} \in \{\pm 1\}^{n+1}$. This led to the following notion of  \emph{block-multilinear approximate degree}, $\bmdeg(f)$, defined by
\begin{align} \label{eqbmdeg}
\bmdeg(f) = \min \quad &t \\ \notag
\text{s.t.}\quad & \exists \, t\text{-tensor } T \text{ acting on } \R^{n+1}, \\ \notag
& | T((x,1),\ldots,(x,1))-f(x)| \leq 2\epsilon \qquad \forall x\in D,\\ \notag
& |T(z_1,\ldots,z_t)|\le 1\quad\forall z_1,\ldots,z_t\in \{\pm 1\}^{n+1}.
\end{align}
Note that if $T$ is a $t$-tensor that is feasible for the program~\eqref{eqbmdeg}, then the polynomial $p$ defined by $p(x)=T((x,1),\ldots,(x,1))$ is feasible for the program~\eqref{eqdeg}, and thus we have
$$\deg_\epsilon(f)\leq \bmdeg(f)\leq 2Q_\epsilon(f).$$

In the recent work~\cite{ABP17} it is shown that the $2t$-tensor in~\eqref{eq:defT} in fact has completely bounded norm at most $1$. In addition, the authors of \cite{ABP17}  also show the converse: the existence of a $2t$-tensor $T \in \R^{(n+1) \times \cdots \times (n+1)}$ that satisfies $||T||\cb\leq 1$ and  $|T((x,1),\ldots,(x,1))-f(x)| \leq 2\epsilon$ for all $x \in D$, ensures the existence of a $t$-query quantum algorithm that outputs the correct sign with probability at least $1-\epsilon$ for all $x \in D$.\footnote{In~\cite{ABP17} the result is stated using a tensor $T \in \R^{2n \times \cdots \times 2n}$. The dimension $2n$ corresponds to the fact that a controlled-phase gate (acting on $\C^{2n}$) is unitarily equivalent to the standard bit oracle. When we allow the quantum algorithm to use additional workspace we obtain the same query complexity measure working with the oracle $O_x = \Diag(x,1)$ (acting on $\C^{n+1}$). See also~\cite{BSS03,AAIKS16}.} That is, if such a $2t$-tensor exists, then $Q_\epsilon(f) \leq t$. 
This leads to the notion of \emph{completely bounded approximate degree}, $\cbdeg(f)$, defined by\footnote{Note that~\cite{ABP17} use the same definition except that they consider the least $t$ such that there exists a $2t$-tensor with these properties.} 
\begin{align} \label{eqcbdeg}
\cbdeg(f) = \min \quad &t \\ \notag
\text{s.t.}\quad & \exists \, t\text{-tensor } T \text{ acting on } \R^{n+1}, \\ \notag
& | T((x,1),\ldots,(x,1))-f(x)| \leq 2\epsilon \qquad \forall x\in D,\\ \notag
& ||T||\cb \leq 1.
\end{align}
Notice that $||T||\cb \leq 1$ implies that $|T(z_1,\ldots,z_{t})| \leq 1$ for all $z_1,\ldots z_t \in \{\pm 1\}^{n+1}$. Therefore we have
$$\deg_\epsilon(f)\leq \bmdeg(f)\leq \cbdeg(f)\leq 2Q_\epsilon(f).$$
As mentioned above, the last inequality is in fact an equality up to rounding: 
\begin{theorem}[{\cite[Cor.~1.5]{ABP17}}] \label{cor1}
For  a Boolean function $f: D \rightarrow \{-1,1\}$ and $\epsilon \geq 0$, we have 
\[
Q_\epsilon(f) = \lceil\cbdeg(f)/2\rceil. 
\]
\end{theorem}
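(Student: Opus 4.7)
The equality breaks into two inequalities. The plan is to combine the two ingredients from~\cite{ABP17} already assembled in the excerpt, together with a small padding step to handle the parity of $\cbdeg(f)$.

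For $\lceil\cbdeg(f)/2\rceil\le Q_\epsilon(f)$, I would start from an optimal quantum algorithm with $t:=Q_\epsilon(f)$ queries, having unitaries $U_0,\ldots,U_t$, initial state $v$ and measurement $\{P_{+1},P_{-1}\}$, and define the $2t$-tensor $T$ by formula~\eqref{eq:defT}. By construction $T((x,1),\ldots,(x,1))$ equals the algorithm's expected output, so $|T((x,1),\ldots,(x,1))-f(x)|\le 2\epsilon$ for every $x\in D$. The crucial point is the CB-norm bound $\|T\|_\mathrm{cb}\le 1$: for any orthogonal $U_j'(i)$, expanding $\sum_{\ind{i}{2t}} T_{\ind{i}{2t}}\, U_1'(i_1)\cdots U_{2t}'(i_{2t})$ via the expression for $T_{\ind{i}{2t}}$ in~\eqref{eq:defT}, the sum over each $i_j$ can be packaged as a single isometric ``controlled'' operator $\sum_i (|i\rangle\langle i|\otimes I)\otimes U_j'(i)$, and the full product becomes a composition of unitaries interleaved with the single contraction $P_{+1}-P_{-1}$, whose operator norm is bounded by~$1$. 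Hence $\cbdeg(f)\le 2t$ and, since $Q_\epsilon(f)\in\N$, $\lceil\cbdeg(f)/2\rceil\le Q_\epsilon(f)$.

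For the reverse inequality, set $k:=\cbdeg(f)$ and pick a witnessing $k$-tensor $T$. If $k$ is odd, pad it to a $(k+1)$-tensor by $\tilde T_{\ind{i}{k},\, i_{k+1}} := T_{\ind{i}{k}}\cdot \delta_{i_{k+1},\,n+1}$; this leaves the diagonal evaluation at $((x,1),\ldots,(x,1))$ unchanged and preserves the CB-norm, because the additional factor $U_{k+1}(n+1)$ in definition~\eqref{eqdef} is orthogonal and therefore does not change the operator norm. We may thus assume the tensor has even order $2t$ with $t=\lceil\cbdeg(f)/2\rceil$. Now invoke the converse direction of~\cite{ABP17}: given a witness for $\|T\|_\mathrm{cb}\le 1$ — unit vectors $u,v$ and orthogonal matrices $U_j(i)$ realising the supremum in~\eqref{eqdef} — read off an initial state, query-independent unitaries $U_0,\ldots,U_t$ and measurement operators $P_{\pm 1}$ of a $t$-query quantum algorithm whose expected output on input $x$ equals $T((x,1),\ldots,(x,1))$. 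The approximation property of $T$ then yields $Q_\epsilon(f)\le t$.

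The step I expect to be by far the hardest is the converse construction just invoked. What is given is an essentially unstructured factorization using ``per-coordinate'' orthogonal matrices $U_j(i)$ — one matrix for each pair $(j,i)$ — whereas a quantum algorithm must use a single data-independent unitary at each time step, interleaved with the fixed phase oracle $O_x=\mathrm{Diag}(x,1)$. Bridging this gap requires simulating the dependence on $i$ via controlled operations on an ancilla register that carries the index, and matching the ``backward half'' $v^*U_0^*\widetilde O_x\cdots U_t^*$ with positions $1,\ldots,t$ of the tensor and the ``forward half'' $U_t\cdots\widetilde O_x U_0 v$ with positions $t+1,\ldots,2t$, using the \emph{same} quantum algorithm (consistent ancillas and consistent unitaries) for both halves. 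This is the technical heart of~\cite[Cor.~1.5]{ABP17}; the rest of the argument, including the padding trick, is then routine.
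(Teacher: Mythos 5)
This theorem is imported by citation from~\cite{ABP17}; the paper does not prove it. What the paper contributes is a footnote reconciling its definition of $\cbdeg(f)$ (smallest $t$ with a $t$-tensor witness) with the one in~\cite{ABP17} (smallest $t$ with a $2t$-tensor witness), which is exactly what forces the $\lceil \cdot/2\rceil$ rounding. Your padding argument for odd orders --- tensoring with $\delta_{i_{k+1},n+1}$ and absorbing the harmless extra factor $U_{k+1}(n+1)$ into the operator norm --- is the correct way to make that reconciliation precise, and your sketch of the forward direction (lifting the query algorithm to a $2t$-tensor with $\|T\|_\mathrm{cb}\le 1$ via controlled unitaries $\sum_i \Pi_i\otimes U_j'(i)$ sandwiching $P_{+1}-P_{-1}$) is essentially right.

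The genuine gap is in your description of the converse. You say one should take ``unit vectors $u,v$ and orthogonal matrices $U_j(i)$ realising the supremum in~\eqref{eqdef}'' and read off a quantum algorithm. But orthogonal matrices realising the supremum merely exhibit that $\|T\|_\mathrm{cb}$ is at least some value; they are not a \emph{factorization of $T$ itself}, and in particular knowing that the sup is $\le 1$ gives no such data. What~\cite{ABP17} actually invokes is a representation theorem for completely bounded multilinear forms (in the Christensen--Effros--Sinclair / Pisier line): $\|T\|_\mathrm{cb}\le 1$ is \emph{equivalent} to the existence of a factorization
\[
T(z_1,\ldots,z_{2t}) = V_0^* \,\pi_1(\diag z_1)\, V_1\, \pi_2(\diag z_2)\, V_2 \cdots V_{2t-1}\,\pi_{2t}(\diag z_{2t})\, V_{2t}
\]
into $*$-representations $\pi_j$ of the diagonal matrices interleaved with contractions $V_j$, and one must then massage such a factorization --- symmetrize the two halves, replace the middle contraction by a difference of projections, control the ambient dimension --- into a single $t$-query algorithm driven by the phase oracle. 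You correctly flag this as the hard part, but the input to that construction is the factorization theorem, not optimizers of the sup in~\eqref{eqdef}; conflating the two would stall the proof.
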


\subsection{New semidefinite reformulation}

Using our earlier results in Section~\ref{secSDP} about the completely bounded norm of a tensor, we can express the completely bounded approximate degree $\cbdeg(f)$ using semidefinite programming. To certify the inequality $||T||\cb \leq 1$ we can use the dual SDP in~\eqref{eq:sdpcb} as follows:
\[
\|T\|_{\mathrm{cb}}\le 1 \Longleftrightarrow \exists \lambda \in \R^{N_s}, y\in \R^{m_s} \text{ such that } \langle e,\lambda \rangle \le 1 \text{ and } 
\Diag(\lambda)+\mathcal A_{s}^*(y)-C_s(T) \in \mathcal S_+^{N_s}.
\]
Here, we may choose $s$ to be any integer $0 \leq s \leq \lfloor t/2\rfloor$, so that $N_s$ is given by $(n+1)^{s} + (n+1)^{t-s}$ and $m_s$ by~\eqref{eq:mefficient} (with $n$ replaced by $n+1$).
We may then use the fact that the constraints: $|T((x,1),\ldots,(x,1))-f(x)| \leq 2\epsilon$ for all $x \in D$, can be written as linear constraints on the coefficients of $T$ to reformulate~\eqref{eqcbdeg} using semidefinite programming. To make it more apparent that $T((x,1),\ldots,(x,1))$ is a linear combination of the coefficients of $T$, recall that by definition $T(z,\ldots,z) = \sum_{i_1,\ldots,i_{t}=1}^{n+1} T_{i_1,\ldots,i_{t}} z_{i_1}\cdots z_{i_{t}}$ for all $z \in \R^{n+1}$. It follows that the parameter $\cbdeg(f)$ can be reformulated as the smallest integer $t\in \N$ for which the following SDP admits a feasible solution:
\begin{align} \label{QQuerySDP}
\cbdeg(f) = \min \quad &t \\ \notag
\text{s.t.}\quad & \exists \, t\text{-tensor } T \in \R^{(n+1)\times \ldots \times (n+1)},  \lambda \in \R^{(n+1)^{s} + (n+1)^{t-s}}, y \in \R^{m_{s}} \\ \notag
 &  \Big|\sum_{i_1,\ldots,i_{t}=1}^{n+1} T_{i_1,\ldots,i_{t}} z_{i_1}\cdots z_{i_{t}} - f(x)\Big| \le 2\epsilon   \qquad \text{ for } x \in D,\, z = (x,1) \in \R^{n+1}
 \\ \notag 
 & \langle e, \lambda \rangle \le 1   \\ \notag
& \diag(\lambda) + \mathcal A_s^* (y) - C_s(T) \in \mathcal S_+^{(n+1)^{s} + (n+1)^{t-s}}.
\end{align}
Recall that, due to Theorem~\ref{thm2}, we may choose $s$ to be any integer $0 \leq s \leq \lfloor t/2 \rfloor$. 

\subsection{Relation to the known SDPs for quantum query complexity} \label{sec:comparison} 

The above SDP~\eqref{QQuerySDP} is not the first SDP that characterizes the quantum query complexity. The parameter $\mathrm{ADV}^\pm(f)$ provided by the general adversary method in~\cite{HLS07} mentioned in the previous section can also be written as an SDP.
 Barnum, Saks, and Szegedy~\cite{BSS03} formulated another SDP characterization for the quantum query complexity. 
 Like ours, the SDP in \cite{BSS03}  expresses  $Q_\epsilon(f)$ as the smallest integer $t \in [n]$ for which there exist some positive semidefinite matrices satisfying some linear (in)equalities. 
 Both the Barnum-Saks-Szegedy SDP and the general adversary method SDP are derived by considering the behaviour of a quantum algorithm on pairs of different inputs; the matrix variables should be seen as the Gram matrices of vectors associated to the quantum algorithm. Instead, as explained before, our SDP fits in the framework of the polynomial method where we only consider the expected output of the quantum algorithm on different inputs.   
There are three main differences between these three SDP characterizations that we will highlight below. 
 
 First, solutions to either the Barnum-Saks-Szegedy SDP or the general adversary method SDP can be turned into quantum query algorithms, while a solution to our SDP only proves the existence of a quantum algorithm (it is not clear how to directly derive a quantum algorithm from it, as far as we know). 

A second difference is the size of the matrix variables involved in the various SDPs, which we have summarized in Table~\ref{tab:1}. We want to highlight the difference in block size between the three SDPs. Using our SDP one can certify the quantum query complexity $t$ of a Boolean function using a single matrix of size $\Theta(n^t)$, while both the general adversary method SDP and the Barnum-Saks-Szegedy SDP use several matrix variables of size $|D|$ (which is $2^n$ for total functions).
We mention that for $\epsilon=0$ our SDP for $\cbdeg(f)$ simplifies: the matrix variable remains of size $\Theta(n^{t})$, there is only one linear inequality, and the number of linear equalities remains $\Theta(n^{2t})$. This is because, since the equations $\sum_{i_1,\ldots,i_{t}=1}^{n+1} T_{i_1,\ldots,i_{t}} z_{i_1}\cdots z_{i_{t}} = f(x)$ (for $x \in D, z = (x,1)$) involve $\Theta(n^t)$ real variables (the coefficients of $T$), there are at most $\Theta(n^t)$ of linear equalities that are linearly independent, and we only need to impose linearly independent equality constraints. 

\begin{table}[ht]
    \centering
    \begin{tabular}{c|c|c|c|c}
         & \# blocks & block size & \# lin.~ineq. & \# equations  \\ \hline
         General adversary method & $n$ & $|D|$ & 0 & $|f^{-1}(1)| \cdot |f^{-1}(0)|$ \\
         Barnum-Saks-Szegedy & $nt+2$ & $|D|$ & $|D|$ & $\Theta(t \cdot |D|^2)$ \\
         Thrm~\ref{cor1} + $\cbdeg(f)$ & $1$ & $\Theta(n^t)$ & $2|D|+1$ & $\Theta(n^{2t})$
    \end{tabular}
    \captionsetup{width=.9\linewidth}
    \caption{A comparison of the size of the general adversary method SDP, the Barnum-Saks-Szegedy SDP, and our SDP for $\cbdeg(f)$. The latter two are feasibility problems whose size depends on the number of queries $t$ (which means we consider $\cbdeg(f) = 2t$). When viewed as a block-diagonal SDP, the first column specifies the number of blocks and the second one the size of the blocks, the third column gives the number of linear inequalities on entries of these blocks and the fourth one the number of linear equations.} 
    \label{tab:1}
\end{table}

A third difference is the fact that the adversary method SDP characterizes the quantum query complexity only up to a multiplicative factor, while both our SDP and the Barnum-Saks-Szegedy SDP give an exact characterization.

\subsection{Dual SDP certificates for large quantum query complexity}

We now turn our attention to providing lower bounds on the quantum query complexity. Given a fixed integer $t\in \N$, finding the smallest scalar $\epsilon \geq 0$ such that $\deg_\epsilon(f) \leq t$ can be expressed as a linear program. The duality theory of linear programming can therefore be used to provide tight lower bounds on the approximate degree $\deg_\epsilon(f)$. Likewise, as we explain in this section, our SDP formulation of $\cbdeg(f)$ and the duality theory of semidefinite programming can be used to give tight lower bounds on $\cbdeg(f)$. 

This section is organized as follows. We first rewrite the program expressing $\cbdeg(f)$ in a form that permits to give certificates for $\cbdeg(f)>t$. These certificates take the form of feasible solutions to a certain SDP. We show how linear programming certificates for $\deg_\epsilon(f)$ can be seen as SDP solutions with a specific structure. We then define an intermediate class of certificates based on second-order cone programming.  

\subsubsection{SDP certificates for $\cbdeg(f)>t$} \label{sdpreformulation}

Let $D \subseteq \cube$ and let $f: D \to \{\pm 1\}$. A certificate for $\cbdeg(f)>t$ can be given as follows. Consider the following minimization problem (derived from the program~\eqref{QQuerySDP}, setting $s=0$)\footnote{Note that we use the less efficient, but easier, SDP-formulation of the completely bounded norm~\eqref{eq:sdpcbt}. In~\eqref{QQuerySDP2} we could have just as easily used the more efficient formulation, but in Section~\ref{sec:certificates} it will be convenient to work with the less efficient formulation.}
\begin{align} \label{QQuerySDP2}
 \min \quad &2\epsilon \\ \notag
\text{s.t.}\quad & T \in \R^{(n+1)\times \ldots \times (n+1)} \text{ a $t$-tensor},  \lambda \in \R^{1+(n+1)^{t}}, y \in \R^m, \epsilon \in \R \\ \notag
 &  \Big|\sum_{i_1,\ldots,i_{t}=1}^{n+1} T_{i_1,\ldots,i_{t}} z_{i_1}\cdots z_{i_{t}} - f(x)\Big| \le 2\epsilon   \quad \text{for all } x \in D, z = (x,1) \in \R^{n+1}
 \\ \notag 
 & \langle e, \lambda \rangle \le 1   \\ \notag
& \diag(\lambda) + \mathcal A_0^* (y) - C_0(T) \in \mathcal S_+^{1+(n+1)^{t}}
\end{align}
Using semidefinite programming duality theory we can formulate its dual. After simplification the dual reads as follows:
\begin{align} \label{QQErrSDPDUAL2}
\max \quad &- w + \sum_{x \in D} \phi(x) f(x)  \\ \notag
\text{s.t.}\quad & \phi=(\phi(x))_{x\in D} \in \R^D,   X \in \mathcal S^{1+(n+1)^{t}}_+, w \in \R \\ \notag
&\sum_{x \in D} |\phi(x)| = 1 \\ \notag
&\mathrm{diag}(X) = w \cdot e   \\ \notag
&\mathcal A_0(X) = \mathbf 0  \\ \notag
&X_{0, \underline{i}} = \sum_{\substack{x \in D \\ z=(x,1)}} \phi(x) z_{i_1}\cdots z_{i_{t}} \quad \text{ for all } \underline{i}=(i_1,\ldots, i_{t}) \in [n+1]^{t}
\end{align}
Note that there is no duality gap since the dual program (\ref{QQErrSDPDUAL2}) is strictly feasible.
A  tuple $(\phi, X,w)$ that forms a feasible solution to~\eqref{QQErrSDPDUAL2} with objective value strictly larger than $2\epsilon$ is an {\emph{SDP certificate}} for $\cbdeg(f) > t$.  

We remark that, for total functions, i.e., when $D=\{\pm 1\}^n$,  the constraint on $X_{0,\underline{i}}$ says that $X_{0,\underline{i}}$ should be equal to a certain Fourier coefficient of the function $\phi$. To see this we briefly recall the basic relevant facts  of Fourier analysis on the Boolean cube. 

\paragraph{Fourier analysis on the Boolean cube.}
For functions $f,g:\{\pm 1\}^n \to \R$ we define the inner product $\langle f,g \rangle = \frac{1}{2^n} \sum_{x \in \{\pm1\}^n} f(x)g(x)$. Then the  character functions $\chi_S(x) := \prod_{i \in S} x_i$ ($S \subseteq [n]$) form an orthonormal basis with respect to this inner product. Any function $f:\{\pm 1\}^n\to\R$ can be expressed in this basis as $f(x) = \sum_{S \subseteq [n]} \widehat f(S) \chi_S(x)$, where $\widehat f(S) = \langle f, \chi_S\rangle$ are the Fourier coefficients of $f$. 

\medskip
We need one more definition in order to point out a link between the last constraint of program~\eqref{QQErrSDPDUAL2} and the Fourier coefficients of $\phi$.
For a tuple $\underline i=(i_1,\ldots,i_t)\in [n+1]^t$ let $S_{\underline i}$ denote the set of indices $k\in [n]$ that occur an odd number of times within the multiset $\{i_1,\ldots,i_t\}$.
Note the identity
\begin{equation} \label{eqiD}
\sum_{x\in D, z=(x,1)} \phi(x)z_{i_1}\cdots z_{i_t} =\sum _{x\in D} \phi(x) \prod_{k\in S_{\underline i}}x_k= \sum_{x\in D} \phi(x)\chi_{S_{\underline i}}(x).
\end{equation}
Hence, in the case $D=\{\pm 1\}^n$, we have
\begin{equation} \label{equi}
\sum_{x\in \{\pm 1\}^n, z=(x,1)} \phi(x)z_{i_1}\cdots z_{i_t} = 2^n\widehat \phi(S_{\underline i})
\end{equation}
and thus the last constraint in program~\eqref{QQErrSDPDUAL2} says that $X_{0,\underline i}= 2^n \widehat\phi(S_{\underline i})$ for all $\underline i\in [n+1]^t$.

\paragraph{Adding redundant inequalities to~\eqref{QQuerySDP2}.}

In the next section we will show how the SDP certificates for the completely bounded approximate degree generalize the linear programming certificates corresponding to the approximate degree. To do so, it will be useful to state an equivalent form of~\eqref{QQErrSDPDUAL2} derived by adding redundant inequalities to the primal problem. Recall that for a tensor $T$ the constraint $||T||\cb \leq 1$ implies that $\Big|\sum_{i_1,\ldots,i_{t}=1}^{n+1} T_{i_1,\ldots,i_{t}} z_{i_1}\cdots z_{i_{t}} \Big| = |T(z,\ldots,z)| \leq 1$ for all $z \in \{\pm 1\}^{n+1}$. As the last two constraints of~\eqref{QQuerySDP2} ensure $||T||\cb \leq 1$, it follows that the conditions 
\begin{equation} \label{eq:redundantineq}
\Big|\sum_{i_1,\ldots,i_{t}=1}^{n+1} T_{i_1,\ldots,i_{t}} z_{i_1}\cdots z_{i_{t}} \Big| \leq 1+ 2 \epsilon \text{ for all } x \not \in D, z = (x,1)
\end{equation}
are redundant for~\eqref{QQuerySDP2}. If we add these inequalities to~\eqref{QQuerySDP2} and then take the dual, then we obtain
\begin{align} \label{QQErrSDPDUAL2b}
\max \quad &- w +\sum_{x \in D} \phi(x) f(x) - \sum_{x \not \in D} |\phi(x)|  \\ \notag
\text{s.t.}\quad & \phi=(\phi(x))_{x\in \cube} \in \R^{\cube},  X \in \mathcal S^{1+(n+1)^{t}}_+, w \in \R \\ \notag
&\sum_{x \in \cube} |\phi(x)| = 1 \\ \notag
&\mathrm{diag}(X) = w \cdot e   \\ \notag
&\mathcal A_0(X) = \mathbf 0  \\ \notag
&X_{0, \underline{i}} = 2^n \widehat \phi(S_{\underline i}) \quad \text{ for all } \underline{i}=(i_1,\ldots, i_{t}) \in [n+1]^{t}
\end{align}
Notice that strong duality holds between the above program~\eqref{QQErrSDPDUAL2b} and the program defined by~\eqref{QQuerySDP2} and~\eqref{eq:redundantineq}. In particular it follows that the optimal value of program~\eqref{QQErrSDPDUAL2b} equals that of program~\eqref{QQErrSDPDUAL2}. Using complementary slackness, we can say slightly more when the optimal value is strictly positive.
\begin{lemma}
If the optimal value of the program~\eqref{QQErrSDPDUAL2b} is strictly positive, then any optimal solution $(\phi,X,w)$ to~\eqref{QQErrSDPDUAL2b} satisfies $\phi(x)=0$ for $x \not \in D$. 
\end{lemma}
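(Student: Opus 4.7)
The plan is to apply complementary slackness to the primal--dual pair consisting of program~\eqref{QQuerySDP2} \emph{augmented with the redundant inequalities}~\eqref{eq:redundantineq} on one hand, and the dual~\eqref{QQErrSDPDUAL2b} on the other. The crucial point is that the inequalities~\eqref{eq:redundantineq} are, in the dual derivation, precisely the ones paired with the variables $\phi(x)$ for $x \notin D$; hence showing that these inequalities are strictly satisfied at any optimal primal solution will force $\phi(x)=0$ for $x \notin D$.

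First, let $(T^*,\lambda^*,y^*,\epsilon^*)$ be an optimal primal solution and $(\phi,X,w)$ an optimal dual solution; both exist and share the common optimal value $2\epsilon^*>0$, so in particular $\epsilon^*>0$. Second, I would verify that $\|T^*\|_{\mathrm{cb}}\leq 1$. The primal feasibility conditions $\langle e,\lambda^*\rangle\leq 1$ and $\mathrm{Diag}(\lambda^*)+\mathcal A_0^*(y^*)-C_0(T^*)\in\mathcal S^{1+(n+1)^t}_+$ together say precisely that $(\lambda^*,y^*)$ is feasible for the dual SDP in~\eqref{eq:sdpcbt} with value at most $1$; since the optimal value of that SDP equals $\|T^*\|_{\mathrm{cb}}$, we get $\|T^*\|_{\mathrm{cb}}\leq 1$. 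Restricting the supremum in~\eqref{eqdef} to one-dimensional orthogonal matrices $U_j(i)=z_i\in\{\pm 1\}\subset O(1)$ then gives, for every $x \in \{\pm 1\}^n$ and $z=(x,1)$,
\[
\Big|\sum_{i_1,\ldots,i_t=1}^{n+1} T^*_{i_1,\ldots,i_t}\, z_{i_1}\cdots z_{i_t}\Big| \;=\; |T^*(z,\ldots,z)| \;\leq\; \|T^*\|_{\mathrm{cb}} \;\leq\; 1 \;<\; 1+2\epsilon^*.
\]
Thus both redundant inequalities in~\eqref{eq:redundantineq} are strict at $T^*$ for every $x\notin D$.

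Third, I would conclude by complementary slackness. Writing each two-sided inequality~\eqref{eq:redundantineq} as a pair of one-sided inequalities introduces two non-negative dual variables $\phi_+(x),\phi_-(x)\geq 0$; in the standard derivation of~\eqref{QQErrSDPDUAL2b} these combine into the unrestricted dual variable $\phi(x)=\phi_+(x)-\phi_-(x)$ with $|\phi(x)|=\phi_+(x)+\phi_-(x)$ at optimum (explaining the $-|\phi(x)|$ term in the objective). Since both one-sided inequalities are strictly satisfied at $T^*$ for $x\notin D$, complementary slackness forces $\phi_+(x)=\phi_-(x)=0$, so $\phi(x)=0$.

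The only mildly delicate step is the bookkeeping in the third paragraph: one has to check that, upon rederiving~\eqref{QQErrSDPDUAL2b} with the inequalities~\eqref{eq:redundantineq} explicitly present in the primal, the $\phi(x)$ for $x\notin D$ indeed emerge as the differences of the multipliers of those redundant inequalities (and similarly that $\phi(x)$ for $x\in D$ come from the inequalities $|T(z,\ldots,z)-f(x)|\leq 2\epsilon$). Once this identification is made, the complementary slackness conclusion is immediate.
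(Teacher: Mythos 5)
Your proof is correct and follows essentially the same strategy as the paper's own argument: use strong duality to obtain an optimal primal solution with $\epsilon^*>0$, observe that the redundant inequalities~\eqref{eq:redundantineq} are strictly slack at this solution because $\|T^*\|_{\mathrm{cb}}\leq 1$, and then invoke complementary slackness. You simply make explicit a few steps that the paper leaves implicit (that $(\lambda^*,y^*)$ certifies $\|T^*\|_{\mathrm{cb}}\leq 1$ via the dual SDP in~\eqref{eq:sdpcbt}, and that restricting to $d=1$ in the definition of $\|\cdot\|_{\mathrm{cb}}$ gives $|T^*(z,\ldots,z)|\leq 1$); these are the same observations the paper records in the paragraph introducing~\eqref{eq:redundantineq}.
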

\begin{proof}
Suppose that the above program~\eqref{QQErrSDPDUAL2b} has an optimal solution $(\phi,X,w)$ with objective value strictly positive. Then, by strong duality, the program defined by~\eqref{QQuerySDP2} and~\eqref{eq:redundantineq} has an optimal solution $(T,\lambda,y,\epsilon)$ with $\epsilon>0$. Since $\epsilon$ is strictly positive, there will be a strictly positive slack in the  inequalities~\eqref{eq:redundantineq}. By complementary slackness this means that the variables in the dual corresponding to these inequalities must be equal to zero for an optimal solution, that is, $\phi(x)=0$ for all $x \not \in D$. 
\end{proof}
Note that any tuple $(\phi,X,w)$ that is feasible for the program in~\eqref{QQErrSDPDUAL2b} and satisfies $\phi(x)=0$ for $x \not \in D$ is in fact feasible for the program in~\eqref{QQErrSDPDUAL2}.

\subsubsection{Approximate degree} \label{sec:approxdeg}
Let $f: D \to \{\pm 1\}$ be given. Given a fixed degree $t$, the smallest $\epsilon \geq 0$ for which there exists a polynomial $p$ of degree at most $t$ that satisfies $\sup_{x \in D} |f(x)-p(x)| \leq 2\epsilon$ and $\sup_{x \not \in D} |p(x)| \leq 1+ 2\epsilon$ can be determined using the following linear program:  
\begin{align} \label{eq:approxdeglp}
 \min \quad & 2\epsilon &  \max \quad & \sum_{x \in D} f(x) \phi(x) -\sum_{x \not \in D} |\phi(x)|  \\ \notag
\text{s.t.} \quad&  \Big|\sum_{S \in \binom{n}{\leq t}} c_S \chi_S(x) - f(x)\Big| \le 2\epsilon \quad \text{ for } x \in D  &\text{s.t.} \quad&\sum_{x \in \cube} |\phi(x)| = 1 \\ \notag
& \Big|\sum_{S \in \binom{n}{\leq t}} c_S \chi_S(x) \Big| \le 1+2\epsilon \quad \text{ for } x \not \in D  &&\phi(x) \in \R \quad \text{for } x \in \cube  \\ \notag
&c = (c_S) \in \smash{\R^{\binom{n}{\leq t}}}, \epsilon \in \R  &&\widehat \phi(S) = 0 \quad \text{for }  S \in \smash{\binom{n}{\leq t}}.
\end{align}
For a fixed $\epsilon \geq 0$, a polynomial $\phi$ that is a feasible solution to the maximization problem in~\eqref{eq:approxdeglp} with objective value strictly larger than $2\epsilon$ is called a \emph{dual polynomial} for $f$, and it is a certificate for $\deg_\epsilon(f) > t$. Note that by LP duality such a certificate exists whenever $\deg_\epsilon(f)>t$. Dual polynomials have been used to give tight bounds on the approximate degree of many Boolean functions, see for example~\cite{Spa08,She13,BT13,BKT17}. 

\bigskip

Feasible solutions to the maximization problem in~\eqref{eq:approxdeglp} provide feasible solutions to the SDP in~\eqref{QQErrSDPDUAL2b}. This gives a ``direct'' proof that dual polynomials give lower bounds on quantum query complexity. 
\begin{lemma} \label{lemapproxdeg}
Let $f:D \rightarrow \R$ and let $\phi: \cube \to \R$ be a feasible solution to~\eqref{eq:approxdeglp} with objective value strictly larger than $2 \epsilon$, then $\cbdeg(f) > t$.
\end{lemma}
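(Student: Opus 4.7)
The plan is to convert the given LP-dual feasible polynomial $\phi$ into a feasible solution of the SDP~\eqref{QQErrSDPDUAL2b} with the same objective value, and then invoke the fact (established just above the lemma) that such a solution, when its objective exceeds $2\epsilon$, certifies $\cbdeg(f)>t$. Since the optimal values of~\eqref{QQErrSDPDUAL2b} and~\eqref{QQErrSDPDUAL2} coincide by strong duality as noted in the paper, producing a feasible tuple for~\eqref{QQErrSDPDUAL2b} with objective $>2\epsilon$ is enough.

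Concretely, I would set $X := \mathbf 0 \in \mathcal S^{1+(n+1)^t}$ and $w := 0$ while keeping $\phi$ as given. The normalization $\sum_{x\in\cube}|\phi(x)|=1$ transfers verbatim from the LP. With the zero choices, the conditions $X \in \mathcal S_+^{1+(n+1)^t}$, $\diag(X) = w\cdot e$, and $\mathcal A_0(X) = \mathbf 0$ are all immediate.

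The heart of the verification is the Fourier constraint $X_{0,\underline i} = 2^n\, \widehat\phi(S_{\underline i})$ for every $\underline i \in [n+1]^t$. Since $X_{0,\underline i}=0$, this reduces to showing $\widehat\phi(S_{\underline i}) = 0$ for all such tuples. The key observation is that $S_{\underline i} \subseteq [n]$ consists of those indices in $[n]$ appearing an odd number of times among $i_1,\ldots,i_t$, so in particular $|S_{\underline i}| \le t$; that is, $S_{\underline i} \in \binom{n}{\le t}$. The LP-feasibility constraint $\widehat\phi(S) = 0$ for all $S \in \binom{n}{\le t}$ therefore covers every Fourier coefficient that the SDP requires to vanish.

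With feasibility in hand, the SDP objective at $(\phi,\mathbf 0,0)$ is $-0 + \sum_{x\in D}\phi(x)f(x) - \sum_{x\notin D}|\phi(x)|$, which is exactly the LP-dual objective at $\phi$, and hence strictly greater than $2\epsilon$ by hypothesis. This yields an SDP certificate and, by the discussion preceding the lemma, concludes $\cbdeg(f)>t$. There is no conceptual obstacle; the only thing to notice is the size bound $|S_{\underline i}|\le t$, which makes the LP's orthogonality conditions exactly match the Fourier vanishing required for the zero choice of $X$.
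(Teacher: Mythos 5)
Your argument matches the paper's own proof: both set $X=0$, $w=0$, keep $\phi$, and observe that $|S_{\underline i}|\le t$ so the LP constraint $\widehat\phi(S)=0$ for $S\in\binom{n}{\le t}$ forces the required Fourier entries to vanish. Your explicit use of~\eqref{QQErrSDPDUAL2b} rather than~\eqref{QQErrSDPDUAL2} is a small but genuine clarification, since $\phi$ in~\eqref{eq:approxdeglp} lives on all of $\cube$ and may be nonzero off $D$, which is exactly what~\eqref{QQErrSDPDUAL2b} is built to accommodate.
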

\begin{proof}
Observe that the tuple $(\phi,X=0,w=0)$ forms a feasible solution to~\eqref{QQErrSDPDUAL2} with objective value strictly larger than $\epsilon$. Indeed, $X=0$ is positive semidefinite, it satisfies $\mathrm{diag}(X) = 0 = w \cdot e$ and $\mathcal A_0(X) = 0$. Moreover, the condition $\widehat \phi(S) = 0$ for all $S \in \binom{n}{\leq t}$ ensures that, for all $\underline i \in [n+1]^t$,
\[
X_{0, \underline{i}} = 2^n\widehat \phi(S_{\underline i})  = 0
\] 
since $|S_{\underline i}|\leq t$. 
\end{proof}

\subsubsection{New certificates for $\cbdeg(f) >t$ through second-order cone programming} \label{sec:certificates}
In (the proof of) Lemma~\ref{lemapproxdeg} we have seen that the linear programming certificates of $\deg_\epsilon(f)>t$ correspond to SDP certificates $(\phi,X,w) = (\phi,0,0)$ using the all-zeroes matrix $X=0$ in~\eqref{QQErrSDPDUAL2b}. Here we consider a more general class of SDP certificates $(\phi,X,w)$ where $X$ and $w$ still have an easy structure: those certificates for which we can take $X = \left(\begin{smallmatrix} w & v^T \\ v & w I \end{smallmatrix}\right)$ for some vector $v \in \R^{(n+1)^{t}}$ and real number $w$. This is based on the following observation.
\begin{lemma} \label{socpfeas}
Let $w \in \R$ and $v \in \R^{(n+1)^t}$. The matrix $X = \begin{pmatrix} w & v^T \\ v & w I \end{pmatrix}$ satisfies $\mathcal A(X) = 0$. Moreover, $X \in \mathcal S_+^{1+(n+1)^{t}}$ if and only if $w \geq ||v||_2$. 
\end{lemma}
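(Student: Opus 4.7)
The plan is to prove the two assertions separately, both of which are elementary. The lemma has no genuine obstacle; the only point requiring a little care is the boundary case $w = 0$ in the Schur complement argument.

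For the first assertion, namely $\mathcal A_0(X) = 0$, I would unpack the definition of $\mathcal A_0$. By construction, the constraints encoded by $\mathcal A_0$ are precisely those of condition~\eqref{cond}: for each $\ell \in [t-1]$ and all tuples $\underline i, \underline i' \in [n+1]^\ell$ and $\underline j, \underline k \in [n+1]^{t-\ell}$, the entry $X_{\underline i\,\underline j,\,\underline i\,\underline k}$ of the bottom-right $(n+1)^t \times (n+1)^t$ block must be independent of~$\underline i$. Since that block of the matrix in question is $wI$, we have
\[
X_{\underline i\,\underline j,\,\underline i\,\underline k} \;=\; w\,\delta_{(\underline i\,\underline j),(\underline i\,\underline k)} \;=\; w\,\delta_{\underline j,\underline k},
\]
which depends only on $\underline j$ and $\underline k$. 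Hence all defining equations of $\mathcal A_0$ are satisfied.

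For the second assertion, I would use the Schur complement. Assume first $w > 0$; then the bottom-right block $wI$ is positive definite, and the standard criterion gives $X \succeq 0$ iff its Schur complement with respect to $wI$ is nonnegative:
\[
w - v^{T}(wI)^{-1}v \;=\; w - \frac{\|v\|_2^{2}}{w} \;\geq\; 0,
\]
equivalent to $w^{2}\geq \|v\|_2^{2}$, i.e. $w\geq \|v\|_2$. For the boundary case $w = 0$, the $2\times 2$ principal submatrix on rows/columns $\{0, \underline i\}$ equals $\left(\begin{smallmatrix}0 & v_{\underline i}\\ v_{\underline i} & 0\end{smallmatrix}\right)$, which is positive semidefinite iff $v_{\underline i} = 0$; thus $X\succeq 0$ forces $v = 0$, matching $w\geq \|v\|_2 = 0$. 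Finally, if $w < 0$ then the diagonal entry $X_{0,0} = w$ is negative, so $X$ is not positive semidefinite, and simultaneously $w < 0 \leq \|v\|_2$ so the inequality also fails. This covers all cases and completes the equivalence.
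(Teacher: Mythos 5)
Your proof is correct and follows essentially the same approach as the paper: verify the linear constraints $\mathcal A_0(X)=0$ directly from the $wI$ structure of the lower-right block, then use a Schur complement for the positive-semidefiniteness characterization. The only cosmetic difference is that you take the Schur complement with respect to the bottom-right block $wI$ and handle the boundary cases $w=0$ and $w<0$ separately, whereas the paper takes it with respect to the top-left entry $w$ and packages the boundary case as ``$X=0$ or $w>0$ and $w - v^\mathsf{T}v/w \geq 0$.''
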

\begin{proof}
First note that $\mathcal A_0(X) = 0$ is trivially satisfied by $X$. Indeed, $\mathcal A$ ignores the first row and column of $X$ and, for all $\ell \in [t-1]$, $\underline i, \underline i' \in [n+1]^\ell, \underline j, \underline k \in [n+1]^{t-\ell}$, we have that 
\[
X_{\underline i \, \underline j, \underline i \, \underline k}  = X_{\underline i' \, \underline j, \underline i' \, \underline k} = \begin{cases} w & \text{if } \underline j = \underline k \\ 0 & \text{else}. \end{cases}\]

Second, by considering the Schur complement of $X$ with respect to its upper-left corner, we have $X \in \mathcal S_+^{1+(n+1)^{t}}$ if and only if either $X = 0$, or $w>0$ and $w - v^T v/w \geq 0$.
\end{proof}

By restricting our attention to feasible solutions of the above form, the program~\eqref{QQErrSDPDUAL2b} reduces to the following second-order cone program:
\begin{align} \label{socp}
    \max \quad &-w+\sum_{x \in D} \phi(x) f(x) - \sum_{x \notin D} |\phi(x)|  \\ \notag
    \mathrm{s.t.} \quad &\sum_{x \in \cube} |\phi(x)| = 1,\ \ w \geq 2^n \sqrt{\sum_{\underline i \in [n+1]^{t}} \widehat \phi(S_{\underline i})^2}
\end{align}
This second-order cone program involves the $(1+(n+1)^t)$-dimensional Lorentz cone. However, by counting the number of tuples $\underline i$ for which $S_{\underline i}$ equals a given set $S$ we can reduce to dimension $\left| \binom{n}{\leq t} \right| = \sum_{k=0}^t \binom{n}{k}$. 
Indeed, for each subset $S \subseteq [n]$ with $|S| \leq t$ let $\mathcal I_S$ denote the set of tuples $\underline i \in [n+1]^t$ for which $S_{\underline i} = S$. One can verify that 
\[
|\mathcal I_S| = \sum_{\substack{k_1,\ldots, k_n \in \N, \\ \sum_{i} k_i \leq t, \\ k_l \text{ is odd for } l \in S, \\ k_l \text{ is even for } l \not \in S}} \frac{t!}{k_1! \cdots k_n! (t-\sum_i k_i)!}.
\]
Then by construction
\[
\sum_{\underline i \in [n+1]^{t}} \widehat \phi(S_{\underline i})^2 = \sum_{S \in \binom{n}{\leq t}} |\mathcal I_S| \widehat \phi(S_{\underline i})^2
\]
and therefore~\eqref{socp} is equivalent to the following pair of primal/dual second-order cone programs: 
\begin{align} \label{eq:socpdeg}
 \min \quad & 2\epsilon &  \max \quad & -w+\sum_{x \in D} f(x) \phi(x) - \sum_{x \notin D} |\phi(x)|  \\ \notag
\text{s.t.} \quad&  \smash{c = (c_S)_{S \in \binom{n}{\leq t}} \in \R^{\binom{n}{\leq t}}}, \epsilon \in \R  &\text{s.t.} \quad& \phi = (\phi(x))_{x \in \cube} \in \R^{\cube}  \\ \notag
&\Big|\sum_{S \in \binom{n}{\leq t}} c_S \chi_S(x) - f(x)\Big| \le 2\epsilon  \text{ for } x \in D && \sum_{x \in \cube} |\phi(x)| = 1 \\ \notag
&\Big|\sum_{S \in \binom{n}{\leq t}} c_S \chi_S(x) \Big| \le 1+2\epsilon \text{ for } x \notin D && v = \Big(2^n\sqrt{|\mathcal I_S|} \, \widehat \phi(S)\Big)_{S \in \binom{n}{\leq t}} \\ \notag  
&\tilde c = \Big(\frac{c_S}{\sqrt{|\mathcal I_S|}}\Big)_{S \in \binom{n}{\leq t}} && w \geq ||v||_2 \\ \notag
&||\tilde c||_2 \leq 1
\end{align}
We note that strong duality holds since both the primal and dual are strictly feasible. 
\begin{lemma} \label{dualpolycb}
If the optimal value of~\eqref{eq:socpdeg} is strictly larger than $2 \epsilon$, then $\cbdeg(f) > t$. 
\end{lemma}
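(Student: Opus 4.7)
The plan is to lift any feasible solution $\phi$ of the dual SOCP in~\eqref{eq:socpdeg} to a feasible solution of the dual SDP~\eqref{QQErrSDPDUAL2b} preserving the objective value, and then invoke the strong duality (already established above between the augmented primal~\eqref{QQuerySDP2}+\eqref{eq:redundantineq} and~\eqref{QQErrSDPDUAL2b}) together with the fact that the optimal value of~\eqref{QQErrSDPDUAL2b} coincides with that of~\eqref{QQErrSDPDUAL2}. This mimics the strategy used in the proof of Lemma~\ref{lemapproxdeg}, except that here the PSD block $X$ is not taken to be zero but is built in the special form allowed by Lemma~\ref{socpfeas}.

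Concretely, given a $\phi$ feasible for~\eqref{eq:socpdeg}, I would define $v\in \R^{(n+1)^t}$ by $v_{\underline i}:=2^n\widehat\phi(S_{\underline i})$, set $w:=\|v\|_2$, and form
\[
X := \begin{pmatrix} w & v^{T} \\ v & wI \end{pmatrix} \in \mathcal S^{1+(n+1)^t}.
\]
By Lemma~\ref{socpfeas}, $X\in \mathcal S_+^{1+(n+1)^t}$ and $\mathcal A_0(X)=0$; and directly from the construction, $\mathrm{diag}(X)=w\cdot e$ and $X_{0,\underline i}=2^n\widehat\phi(S_{\underline i})$ for every $\underline i\in [n+1]^t$. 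Hence $(\phi,X,w)$ is feasible for~\eqref{QQErrSDPDUAL2b}. The SOCP norm constraint $w\ge\bigl\|(2^n\sqrt{|\mathcal I_S|}\,\widehat\phi(S))_{S\in\binom{n}{\le t}}\bigr\|_2$ is equivalent to $w\ge\|v\|_2$ via the regrouping identity $\sum_{\underline i\in [n+1]^t}\widehat\phi(S_{\underline i})^2 = \sum_{S\in\binom{n}{\le t}}|\mathcal I_S|\,\widehat\phi(S)^2$ recorded before~\eqref{eq:socpdeg}, so the lift is indeed feasible and has exactly the same objective value $-w+\sum_{x\in D}\phi(x)f(x) - \sum_{x\notin D}|\phi(x)|$.

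Once the lift is in place, an SOCP objective value strictly greater than $2\epsilon$ immediately yields a feasible point of~\eqref{QQErrSDPDUAL2b} with objective value strictly greater than $2\epsilon$. By strong duality, the optimum of the augmented primal program defined by~\eqref{QQuerySDP2}+\eqref{eq:redundantineq} therefore exceeds $2\epsilon$, and this optimum equals that of~\eqref{QQuerySDP2}; consequently no $t$-tensor $T$ with $\|T\|_{\mathrm{cb}}\le 1$ can approximate $f$ on $D$ within additive error $2\epsilon$, which by definition of $\cbdeg$ means $\cbdeg(f)>t$. The only nontrivial ingredient is the explicit lift via Lemma~\ref{socpfeas}; the remainder of the argument is bookkeeping to reconcile the tuple-indexed ($\underline i$) SDP entries with the set-indexed ($S$) SOCP variables.
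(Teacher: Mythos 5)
Your proof is correct and fills in exactly the argument the paper leaves implicit: the paper gives no explicit proof of Lemma~\ref{dualpolycb}, instead relying on the preceding discussion that~\eqref{eq:socpdeg} is obtained from~\eqref{QQErrSDPDUAL2b} by restricting to solutions of the form in Lemma~\ref{socpfeas}, and your construction of $(\phi,X,w)$ with $v_{\underline i}=2^n\widehat\phi(S_{\underline i})$ and $w=\|v\|_2$ is precisely the inverse of that restriction. The bookkeeping with the regrouping identity and the concluding duality step match the paper's intended reasoning.
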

Hence, the above forms a strengthening of the polynomial method. Indeed, any $\phi$ feasible for the maximization program in~\eqref{eq:approxdeglp} (with objective $>2\epsilon$) will have low-degree Fourier coefficients equal to zero and therefore $(\phi, w=0,v=0)$ will be feasible for the maximization program in~\eqref{eq:socpdeg} (with objective $>2 \epsilon)$. Also, notice that compared to~\eqref{eq:approxdeglp} the primal here has the additional constraint that the coefficients of the approximating polynomial have to be normalized (w.r.t.~a weighted 2-norm).

\begin{center}
    \subsection*{Acknowledgements}
\end{center}
We thank Srinivasan Arunachalam and Jop Bri\"{e}t for bringing this problem to our attention and for many useful discussions. We thank the anonymous referees of QIP'19 for their feedback. 

\bigskip
\newcommand{\etalchar}[1]{$^{#1}$}

\end{document}